\pgfplotsset{
    compat=newest,
	colormap={justblackandwhite}{color=(white) color=(white) color=(white)}
}
\newtheorem{lemma}{Lemma}
\title{
Covariance Fitting Interferometric Phase Linking: Modular Framework and Optimization Algorithms
}
\date{January 2024}
\author{Phan Viet Hoa Vu, Arnaud Breloy, Frédéric Brigui, Yajing Yan {\it Member IEEE}, and Guillaume Ginolhac {\it Senior Member, IEEE}
\thanks{
P. V. H. Vu is with ONERA-DEMR, University Paris Saclay, and with LISTIC (EA3703), University Savoie Mont-Blanc.
A. Breloy is with CEDRIC, CNAM, Paris.
F. Brigui is with ONERA-DEMR, University Paris Saclay.
Y. Yan and G. Ginolhac are with LISTIC (EA3703), University Savoie Mont-Blanc.
}
}
\begin{document}

\maketitle

\begin{abstract}
Interferometric phase linking (IPL) has become a prominent technique for processing images of areas containing distributed scaterrers in SAR interferometry.
Traditionally, IPL consists in estimating consistent phase differences between all pairs of SAR images in a time series from the sample covariance matrix of pixel patches on a sliding window.
This paper reformulates this task as a covariance fitting problem: in this setup, IPL appears as a form of projection of an input covariance matrix so that it satisfies the phase closure property.
Given this modular formulation, we propose an overview of covariance matrix estimates, regularization options, and matrix distances, that can be of interest when processing multi-temporal SAR data.
In particular, we will observe that most of the existing IPL algorithms appear as special instances of this framework.
We then present tools to efficiently solve related optimization problems on the torus of phase-only complex vectors: majorization-minimization and Riemannian optimization.
We conclude by illustrating the merits of different options on a real-world case study.
\end{abstract}

\section{Introduction}

Recent remote sensing missions (Sentinel-1, UAVSAR, TerraSAR-X, etc.) have brought an unprecedented amount of available synthetic aperture radar (SAR) images time series.
For interferometric SAR (InSAR), these systematic and regular acquisitions enabled the utilization of multi-temporal techniques (MT-InSAR), which significantly enhanced the accuracy of Earth displacement estimation.
Under the assumption of distributed targets, interferometric phase-linking (IPL) has emerged as a fundamental methodology to estimate phase differences from all available SAR acquisitions within the dataset \cite{Guarnieri2008, Ferretti2011_squee, Fornaro2015, Cao2015, jiang2020distributed, ansari2018efficient, ho2022compressed, zwieback2022cheap, vu2023covariance, bai2023lamie}.
The driving idea behind this technique is to leverage the redundancy of the time series
in order to compensate for the coherence loss between images over time, as illustrated in Fig.~\ref{fig:IPL_principle}.
The initial formulation of IPL was obtained from the perspective of approximate maximum likelihood estimation assuming a complex circular Gaussian model.
This estimation procedure then appears as an optimization problem aiming to retrieve phase estimates from the sample covariance matrix of each pixel patch.
Subsequent works motivated many variants, e.g., by improving coherence pre-estimation step \cite{jiang2020distributed}, using compression schemes \cite{ho2022compressed}, relaxing the optimization problem \cite{ansari2018efficient}, or assuming non-Gaussian models \cite{schmitt2014adaptive, wang2015robust}.
An overview of advances in this scope is presented in \cite{minh2023interferometric}.

In essence, all IPL algorithms aim to recover a property called phase closure (referring to the continuity of phase differences over the time series) within the sample covariance matrix.
In this paper, we leverage this reinterpretation to reformulate IPL as a covariance fitting problem \cite{Ottersten1998, Shahbazpanahi2004, Yardibi2010, Stoica2011, Meriaux2017, Mériaux2019, Mériaux2019b,Mériaux2021}: the task is thus expressed as fitting constrained phases to the modulus of any plug-in estimate of the covariance matrix.
This approach, referred to as COFI-PL, offers a concise and modular formulation.
It encompasses the maximum likelihood approaches as a special case, as well as many other generalizations concerning the construction of the covariance matrix plug-in estimate, and the fitting objective function.
In this scope, we overview staple building blocks from the state-of-the-art and relate them to existing IPL algorithms.
Furthermore, we consider new options bringing promising results on a real-world case study, notably the use of least-squares fitting objective, and the phase-only sample correlation matrix.

As all of the considered methods lead to the construction of an optimization problem over the torus of phase-only complex vectors $\mathbb{T}_p$, we further investigate two optimization frameworks suited to this space.
Contrarily to the majority of existing works that focus on optimizing the phases directly, this offers several advantages: it simplifies tedious computations of trigonometric functions to simpler matrix operations, and enables us to fully harness the geometric structure of the constrained space (for example, invariance of the phases modulo $2\pi$ is not an appearing issue when considering $\mathbb{T}_p$ directly).
The first framework is majorization-minimization \cite{hunter2004tutorial, sun2016majorization} on $\mathbb{T}_p$.
Its use is limited to quadratic problems on $\mathbb{T}_p$, which interestingly includes classical maximum likelihood \cite{Vu2022igarss, Vu2023robust} and least-squares based \cite{vu2023covariance} IPL formulations.
The main appeal of this framework lies in its ability to yield cost-efficient algorithms (here similar to modified power-methods), with guaranteed monotonicity, and with no required hyperparameter such as a step-size selection.
Hence, they appear suited to large-scale implementation.
The second framework is Riemannian optimization on $\mathbb{T}_p$ \cite{smith1999optimum, elmossallamy2021ris, xiong2023mimo}.
It is more flexible as it allows to work with any objective function.
It also opens the path to acceleration schemes and generalizations from many algorithms of the Riemannian optimization framework \cite{absil2008optimization, boumal2023introduction}.

The rest of the paper is organized as follows:
Section \ref{sec:background} presents IPL and the corresponding covariance matrix structure.
Section \ref{sec:COFIPL_framework} presents the generic construction of a covariance fitting IPL (COFI-PL) optimization problem.
This framework involves three construction modules: some design options suited to SAR data are respectively presented in Sections \ref{sec:cm_est} (covariance matrix estimation),
Section \ref{sec:cm_regul} (covariance matrix regularization), and Section \ref{sec:matrix_dist} (matrix distance objective function).
Section \ref{sec:mm} and Section \ref{sec:riem_opt} respectively introduce the majorization-minimization and Riemannian optimization on $ \mathbb{T}_p$ to solve
the resulting COFI-PL problems.
Section \ref{sec:sota} overviews existing IPL algorithms and links them to COFI-PL formulations.
Section \ref{sec:simu} presents an application of the proposed methods to a real-world case study.

\begin{figure}[!t]
    \centering
    \includegraphics[width=0.8\linewidth]{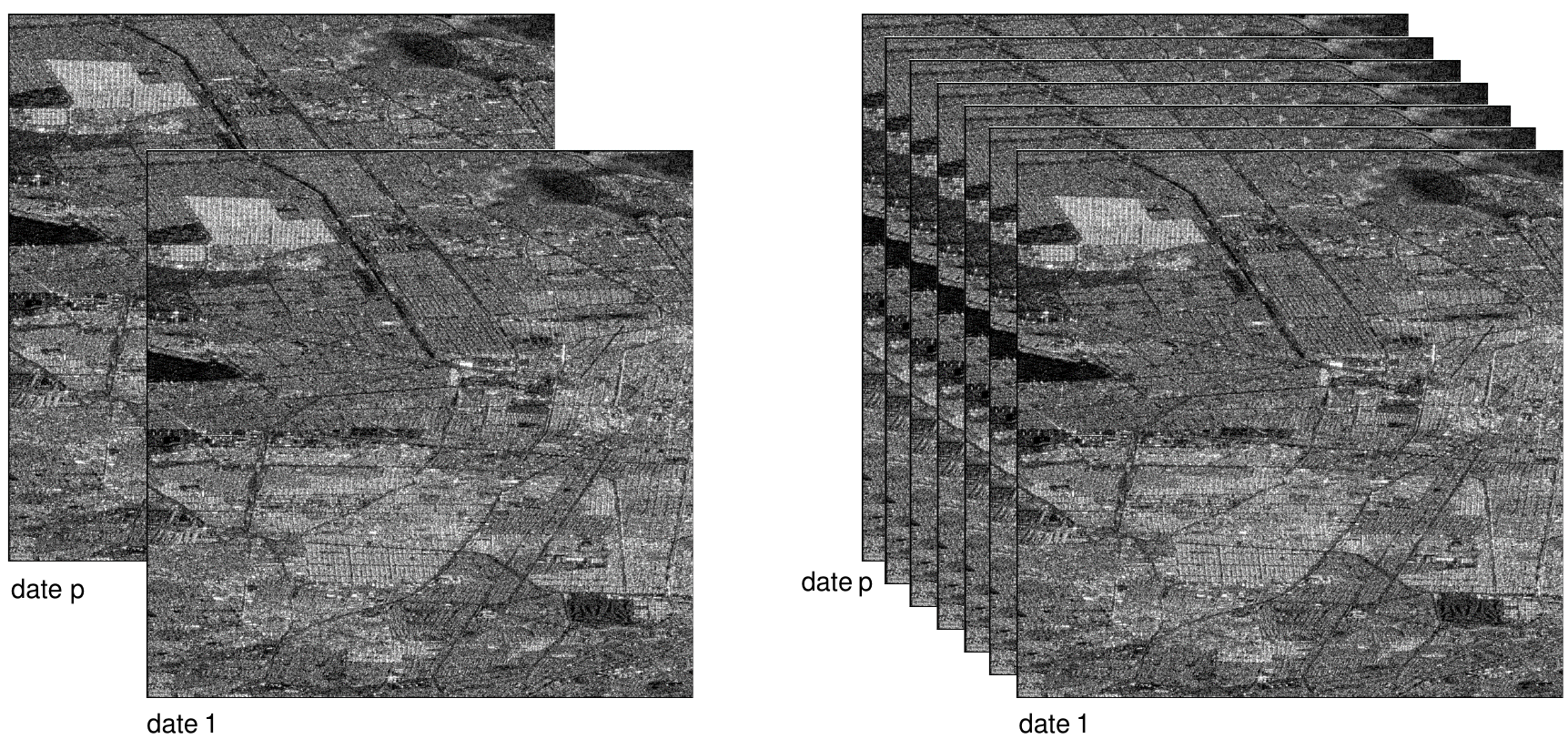}
    \includegraphics[width = 0.46\linewidth]{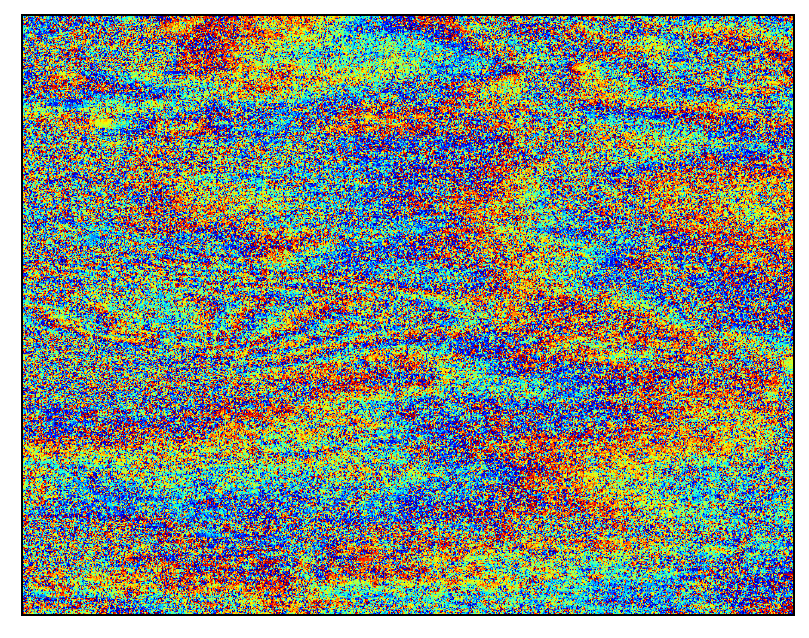}
    \includegraphics[width = 0.46\linewidth]{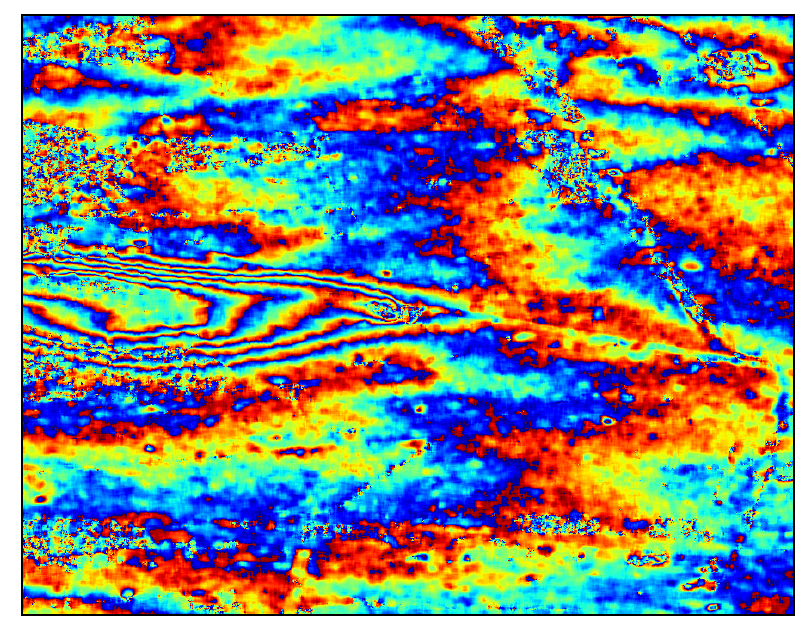} 
    
    \caption{Naive interferogram formed between two SAR images (left) and interferogram formed with interferometric phase linking by leveraging an interleaved time series between those two images (right).
    }
    \label{fig:IPL_principle}
\end{figure}


\section{Background}

\label{sec:background}

\subsection{Notations}

The following convention is adopted: italic indicates a scalar quantity, lower case boldface indicates a vector quantity and upper case boldface a matrix.
For a complex-valued number $x = a+ib=re^{i\theta}\in\mathbb{C}$, 
$\mathfrak{Re}(x)=a$, $\mathfrak{Im}(x)=b$, ${\rm mod}(x)=r$, ${\rm arg}(x)=\theta$ and $\phi_{\mathbb{T}}(x)= e^{i\theta}$.
These operators overload to matrix entries by being applied element-wise.
$\mathcal{H}_p^{++}$ (resp. $\mathcal{S}_p^{++}$) is the set of hermitian (resp. symmetric) positive definite matrices.
$\mathbb{T}_p$ is the torus of phase only complex vectors of dimension $p$ as in \eqref{eq:phaseonlytorus}.
The operators ${\rm tr}(\cdot)$ and $|\cdot|$ return respectively the trace and the determinant of a matrix.
The entrywise complex conjugation is denoted $\cdot^*$, while the transpose (resp. transpose conjugate) operation is denoted $\cdot^\top$ (resp. $\cdot^H$).
The Hadamard product is denoted $\circ$.
The matrix $\mathbf{I}$ denotes the identity matrix of appropriate dimension.
A circular multivariate Gaussian vector of mean $\boldsymbol{\mu}$ and covariance matrix $\mathbf{\Sigma}$ is denoted $\mathbf{x}\sim \mathcal{CN}(\boldsymbol{\mu},\mathbf{\Sigma})$.


\subsection{InSAR covariance matrix structure}

\begin{figure}[!t]
	\begin{center}
    \input{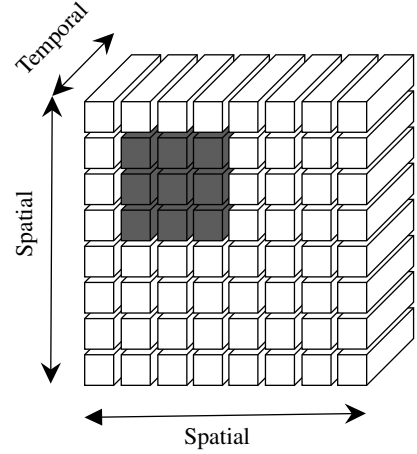}
	\end{center}							
	\caption{
	\label{fig:datacube}
	Stack of $p$ co-registered SAR images and multi-looking window: gray pixels represent the current local patch, denoted $\{\mathbf{x}_i\}_{i=1}^n$.}
\end{figure}

From a given stack of $p$ co-registered SAR images, we consider a sliding window that processes all local patches of $n$ multivariate pixels, as illustrated in Figure \ref{fig:datacube}. 
A pixel patch is denoted as $\{\mathbf{x}_i\}_{i=1}^n$, with $\mathbf{x}_i \in \mathbb{C}^{p},~\forall i\in [\![1,n]\!]$.
A single multivariate pixel $\mathbf{x}_i$ contains a time-series (in chronological order) of the $p$ snapshots, i.e.
\begin{equation}
\mathbf{x}_i = \left[~ x_i^1 ,~\cdots,~x_i^{p}  ~ \right]^{\top} ~\in \mathbb{C}^p.
\end{equation}
The set $\{\mathbf{x}_i\}_{i=1}^n$ is assumed to be a homogeneous patch containing $n$ adjacent pixels with similar scattering and statistical properties. 
From the standard physical considerations about distributed scatterers in SAR, we have that the first and second order moments as follows:
\begin{equation}\label{eq:moments}
\begin{aligned}
& \mathbb{E} \left[ x^q \right] = 0,~\forall~k\in [\![1,p]\!] \\
& \mathbb{E} \left[ x^{q}  ( x^{\ell} )^* \right] = 
\upsilon_{q,\ell}
\sigma_q \sigma_{\ell} e^{j(\theta_q - \theta_{\ell})}
,~\forall (q,\ell) \in [\![1,p]\!]^2 \\
\end{aligned}
\end{equation}
where
\begin{itemize}

\item[$\bullet$] $\sigma_q^2 = \mathbb{E}\left[ x^q(x^q)^H\right] \in \mathbb{R}^+$ is the variance of $x^q$.
The corresponding vector of standard deviations is denoted $\boldsymbol{\sigma} = \left[ \sigma_1 ,~ \cdots,~ \sigma_{p}  \right]$. \vspace{0.2cm}

\item[$\bullet$] $\upsilon_{q,\ell}  \in [0 , 1]$
is the coherence coefficient between $x^q$ and $x^{\ell}$.
The corresponding coherence matrix is denoted $\boldsymbol{\Upsilon}$, with entries $[ \boldsymbol{\Upsilon} ]_{q,\ell} = \upsilon_{q,\ell} \in [0,1]$. Also remark that $[ \boldsymbol{\Upsilon} ]_{\ell,\ell}=1, \forall l\in [\![1,p]\!]$. \vspace{0.2cm}

\item[$\bullet$] $\theta_q$ is the phase instant $q$.
We denote the phase vector $\boldsymbol{\theta} = \left[ \theta_{1},~\cdots,~\theta_{p}\right]$, and the corresponding vector of complex phases\footnote{
Because phase-only complex vectors \cite{smith1999optimum} will be extensively used in this work, we need to distinguish two objects related to the polar decomposition of complex numbers.
For $x = a+ib=re^{i\theta}\in\mathbb{C}$, $\theta$ is indifferently called the phase, angle, or argument.
However, $e^{i\theta}$ will be specifically referred to as the \textit{complex phase}.} is
\begin{equation}\label{eq:w_vector}
\mathbf{w}_{\boldsymbol{\theta}} = 
\left[ e^{j\theta_{1}},~\cdots,~e^{j\theta_{p}}\right] ~\in \mathbb{T}_p,
\end{equation}
where 
\begin{equation} \label{eq:phaseonlytorus}
    \mathbb{T}_p = \{ \mathbf{w}\in\mathbb{C}^p ~|~|[\mathbf{w}]_q| =1, ~\forall q\in [\![1,p]\!] \}
\end{equation}
is the torus of phase-only complex vectors.
By convention, we will use the reference $\theta_1 = 0$, which is equivalent to $[\mathbf{w}_{\boldsymbol{\theta}}]_1=1$.

\end{itemize}
The covariance structure in \eqref{eq:moments} is expressed in matrix form as 
\begin{equation} \label{eq:cov_struct_1}
 \mathbb{E} \left[ \mathbf{x}  \mathbf{x}^H \right]
 = \mathbf{\Sigma}
 = {\rm diag}(\mathbf{w}_{\boldsymbol{\theta}}) {{\boldsymbol{\Psi}}}
  {\rm diag}(\mathbf{w}_{\boldsymbol{\theta}})^H
  = \boldsymbol{\Psi}  \circ (\mathbf{w}_{\boldsymbol{\theta}}
\mathbf{w}_{\boldsymbol{\theta}}^H),
\end{equation}
where $\boldsymbol{\Psi} =    {\boldsymbol{\Upsilon}} \circ\boldsymbol{\sigma} \boldsymbol{\sigma}^\top$ is the coherence matrix scaled by the variance coefficients. 
We can also notice that this decomposition coincides with the modulus-argument decomposition, i.e.:
\begin{equation}\label{eq:mod_arg_decomp}
\mathbf{\Sigma} = {\rm mod}(\mathbf{\Sigma})  \circ \phi_\mathbb{T} (\mathbf{\Sigma}) \overset{\Delta}{=} \boldsymbol{\Psi}  \circ (\mathbf{w}_{\boldsymbol{\theta}}
\mathbf{w}_{\boldsymbol{\theta}}^H).
\end{equation}
where we used the complex phase extraction operator defined as $\phi_\mathbb{T} : x {=} r e^{i\theta} \mapsto e^{i\theta}$ (which extends naturally to matrices by being applied entry-wise). 

From the covariance matrix expression in \eqref{eq:moments} and \eqref{eq:mod_arg_decomp}, we observe that $\mathbf{\Sigma}$
cannot be any covariance matrix in $\mathcal{H}_p^{++}$, as it exhibits the particular phase structure $ {\rm arg}(\mathbf{\Sigma}) = \mathbf{w}_{\boldsymbol{\theta}} \mathbf{w}_{\boldsymbol{\theta}}^H$.
Indeed, if we denote the phase differences between two images indexed $q$ and $\ell$ as $\Delta_{q,\ell} = \theta_q - \theta_{\ell}$, we have
\begin{equation}
\Delta_{q,\ell} +
\Delta_{\ell,j} +
\Delta_{j,q} = 0
\end{equation}
that is satisfied for all triplet $\{q,\ell,j\}$.
Such relationship translates directly into $\mathbf{\Sigma}$ as
\begin{equation} \label{eq:phase_closure}
{\rm arg}  (\mathbf{\Sigma}_{q\ell})   +
{\rm arg}  (\mathbf{\Sigma}_{\ell j})  +
{\rm arg}  (\mathbf{\Sigma}_{j q}) = 0
\end{equation}
because ${\rm arg} (\mathbf{\Sigma}_{q\ell}) = \Delta_{q,\ell}$ from \eqref{eq:moments}.
The aforementioned property is referred to as phase closure, or phase consistency.
It is an important property in MT-InSAR, as it is related to the continuity of physical phenomena, such as Earth displacement.

\subsection{Interferometric phase linking}

In practice, the true covariance matrix $\mathbf{\Sigma}$ of the data is unknown.
The interferometric phases have thus to be estimated solely from the sample set $\{\mathbf{x}_i\}_{i=1}^n$.
From equation \eqref{eq:cov_struct_1}, a naive approach would consists in computing the sample covariance matrix
\begin{equation} \label{eq:SCM}
\mathbf{S} = \frac{1}{n} \sum_{i=1}^n \mathbf{x}_i \mathbf{x}_i^H ,
\end{equation}
and identifying the phase difference from the arguments of its entries, i.e.,
\begin{equation} \label{eq:naive_ipl}
 \hat{\Delta}_{q,\ell} =     {\rm arg}  \ (\mathbf{S}_{q\ell})
\end{equation}
Unfortunately, this simple estimate is relatively inaccurate (especially when $|q-\ell|$ increases, due to the temporal decorrelation).
It also provides a series of phase difference estimates that do not satisfy the phase closure \eqref{eq:phase_closure}. 
More accurate estimation procedures consist rather in directly estimating the vector of complex phases $\mathbf{w}_{\boldsymbol{\theta}}$ (or equivalently, the phase vector $\boldsymbol{\theta}$) from $\mathbf{S}$ by leveraging the prior structure \eqref{eq:mod_arg_decomp}.
The process is referred to as phase triangulation, or interferometric phase linking (IPL), for which numerous algorithms have been developed over the years (cf. \cite{minh2023interferometric} for a recent overview).
In this scope, the next section proposes a general framework capable of encompassing most of the existing methods, providing a modular structure for their extensions.

\section{COFI-PL Framework}

\label{sec:COFIPL_framework}

This section reformulates IPL as a generic covariance fitting problem, whose corresponding framework will be referred to as COFI-PL.
Covariance fitting (or covariance matching) is a widely employed technique in array processing, which consists in refining the structure of an input covariance matrix estimator by minimizing a projection criterion \cite{Ottersten1998, Shahbazpanahi2004, Yardibi2010, Stoica2011, Meriaux2017, Mériaux2019, Mériaux2019b,Mériaux2021}.
In this setup, IPL can be interpreted as projecting (according to some distance or divergence) the input estimate to the set of matrices that satisfy the phase closure property.
In particular, we will see that most of the established MLE-inspired algorithms (e.g., from \cite{Guarnieri2008, Ferretti2011_squee, Cao2015}) appear as a special case of COFI-PL when considering the Kullback-Leibler (KL) divergence as a fitting criterion.

Given any (possibly regularized) plug-in estimate of the covariance matrix $\mathbf{\Sigma}$, denoted $\tilde{\mathbf{\Sigma}}$, that does not satisfy the phase closure, our objective is to determine the ``best'' phase-constrained fitting given some distance criterion linking $\tilde{\mathbf{\Sigma}}$ and its modulus denoted
\begin{equation}
    \tilde{\mathbf{\Psi}} \overset{\Delta}{=} {\rm mod}(\tilde{\mathbf{\Sigma}}).
\end{equation}
The problem is formulated as 
\begin{equation}\label{eq:COFI-PL}
\begin{array}{c l}
\underset{\mathbf{w}_{\boldsymbol{\theta}}}{\rm minimize}
&
d^2 (  \tilde{\mathbf{\Sigma}}
,
\tilde{\boldsymbol{\Psi}}
   \circ
   \mathbf{w}_{\boldsymbol{\theta}}\mathbf{w}_{\boldsymbol{\theta}}^H 
)
\\
{\rm subject~to} 
&
\mathbf{w}_{\boldsymbol{\theta}} \in \mathbb{T}_p
\\
&
\theta_1=0
\end{array}
\end{equation}
where $d$ is a matrix distance (or divergence) that will be specified later on.
Note that the only variable is the vector $\mathbf{w}_{\boldsymbol{\theta}}$, so we will use the compact notation 
\begin{equation}
 f^d_{\tilde{\mathbf{\Sigma}}} ( \mathbf{w}_{\boldsymbol{\theta}}) = 
d^2 (  \tilde{\mathbf{\Sigma}}
,
\tilde{\boldsymbol{\Psi}}
   \circ
   \mathbf{w}_{\boldsymbol{\theta}}\mathbf{w}_{\boldsymbol{\theta}}^H 
)
\end{equation}
for the objective in \eqref{eq:COFI-PL}.
Also remark that the objective function in \eqref{eq:COFI-PL} is invariant to a constant phase-shift of all entries in $\mathbf{w}_{\boldsymbol{\theta}}$, thus the constraint $\theta_1=0$ can be discarded, and achieved \textit{a posteriori} by subtracting $\theta_1$ to all the optimized phases.
The generic COFI-PL problem is finally expressed as
\begin{equation}\label{eq:COFI-PL2}
\begin{array}{c l}
\underset{\mathbf{w}_{\boldsymbol{\theta}}}{\rm minimize}
&
f^d_{\tilde{\mathbf{\Sigma}}} ( \mathbf{w}_{\boldsymbol{\theta}}) 
\\
{\rm subject~to} 
&
\mathbf{w}_{\boldsymbol{\theta}} \in \mathbb{T}_p.
\end{array}
\end{equation}
This formulation then offers a multitude of options concerning:
\begin{itemize}
    \item The construction of the covariance matrix plug-in $\hat{\mathbf{\Sigma}}$
    and its possible regularization $\tilde{\mathbf{\Sigma}}$.
    
    \item The choice of the matrix distance $d$ according to a geometry of interest.

    \item The optimization method to address the constraint $\mathbf{w}_{\boldsymbol{\theta}} \in \mathbb{T}_p$ efficiently.
\end{itemize}
With these options specified, the corresponding instance of COFI-PL then defines a complete processing chain from the data $\{\mathbf{x}_i\}_{i=1}^n$ to the phase estimates, as illustrated in Figure \ref{fig:COFI-PLchain}.
The following sections \ref{sec:cm_est}, \ref{sec:cm_regul}, and \ref{sec:matrix_dist} will provide an overview of practical design options suited to SAR data.
Then sections \ref{sec:mm} and \ref{sec:riem_opt} will present two optimization frameworks that can be used to solve \eqref{eq:COFI-PL2} under the constraint $\mathbf{w}_{\boldsymbol{\theta}} \in \mathbb{T}_p$.

\begin{figure*}[!t]
    \tikzset{every picture/.style={line width=0.75pt}} 

\begin{tikzpicture}[x=0.75pt,y=0.75pt,yscale=-1,xscale=1]

\draw    (1.79,50.25) -- (42.33,50.25) ;
\draw [shift={(44.33,50.25)}, rotate = 180] [color={rgb, 255:red, 0; green, 0; blue, 0 }  ][line width=0.75]    (10.93,-3.29) .. controls (6.95,-1.4) and (3.31,-0.3) .. (0,0) .. controls (3.31,0.3) and (6.95,1.4) .. (10.93,3.29)   ;
\draw   (45.08,26.85) .. controls (45.08,18.1) and (52.18,11) .. (60.93,11) -- (136.94,11) .. controls (145.7,11) and (152.79,18.1) .. (152.79,26.85) -- (152.79,74.4) .. controls (152.79,83.15) and (145.7,90.25) .. (136.94,90.25) -- (60.93,90.25) .. controls (52.18,90.25) and (45.08,83.15) .. (45.08,74.4) -- cycle ;

\draw    (152.79,50.25) -- (193.33,50.25) ;
\draw [shift={(195.33,50.25)}, rotate = 180] [color={rgb, 255:red, 0; green, 0; blue, 0 }  ][line width=0.75]    (10.93,-3.29) .. controls (6.95,-1.4) and (3.31,-0.3) .. (0,0) .. controls (3.31,0.3) and (6.95,1.4) .. (10.93,3.29)   ;
\draw   (196.08,26.85) .. controls (196.08,18.1) and (203.18,11) .. (211.93,11) -- (287.94,11) .. controls (296.7,11) and (303.79,18.1) .. (303.79,26.85) -- (303.79,74.4) .. controls (303.79,83.15) and (296.7,90.25) .. (287.94,90.25) -- (211.93,90.25) .. controls (203.18,90.25) and (196.08,83.15) .. (196.08,74.4) -- cycle ;

\draw    (303.79,50.25) -- (344.33,50.25) ;
\draw [shift={(346.33,50.25)}, rotate = 180] [color={rgb, 255:red, 0; green, 0; blue, 0 }  ][line width=0.75]    (10.93,-3.29) .. controls (6.95,-1.4) and (3.31,-0.3) .. (0,0) .. controls (3.31,0.3) and (6.95,1.4) .. (10.93,3.29)   ;
\draw   (347.08,26.85) .. controls (347.08,18.1) and (354.18,11) .. (362.93,11) -- (438.94,11) .. controls (447.7,11) and (454.79,18.1) .. (454.79,26.85) -- (454.79,74.4) .. controls (454.79,83.15) and (447.7,90.25) .. (438.94,90.25) -- (362.93,90.25) .. controls (354.18,90.25) and (347.08,83.15) .. (347.08,74.4) -- cycle ;

\draw    (454.79,50.25) -- (495.33,50.25) ;
\draw [shift={(497.33,50.25)}, rotate = 180] [color={rgb, 255:red, 0; green, 0; blue, 0 }  ][line width=0.75]    (10.93,-3.29) .. controls (6.95,-1.4) and (3.31,-0.3) .. (0,0) .. controls (3.31,0.3) and (6.95,1.4) .. (10.93,3.29)   ;
\draw   (498.08,26.85) .. controls (498.08,18.1) and (505.18,11) .. (513.93,11) -- (589.94,11) .. controls (598.7,11) and (605.79,18.1) .. (605.79,26.85) -- (605.79,74.4) .. controls (605.79,83.15) and (598.7,90.25) .. (589.94,90.25) -- (513.93,90.25) .. controls (505.18,90.25) and (498.08,83.15) .. (498.08,74.4) -- cycle ;
\draw    (605.79,50.25) -- (646.33,50.25) ;
\draw [shift={(648.33,50.25)}, rotate = 180] [color={rgb, 255:red, 0; green, 0; blue, 0 }  ][line width=0.75]    (10.93,-3.29) .. controls (6.95,-1.4) and (3.31,-0.3) .. (0,0) .. controls (3.31,0.3) and (6.95,1.4) .. (10.93,3.29)   ;

\draw (-10,28) node [anchor=north west][inner sep=0.75pt]   [align=left] {$\{\mathbf{x}_i\}_{i=1}^n$};
\draw (165,28) node [anchor=north west][inner sep=0.75pt]   [align=left] {$\hat{\mathbf{\Sigma}}$};
\draw (315,28) node [anchor=north west][inner sep=0.75pt]   [align=left] {$\tilde{\mathbf{\Sigma}}$};
\draw (462,28) node [anchor=north west][inner sep=0.75pt]   [align=left] {\eqref{eq:COFI-PL2}};
\draw (620,28) node [anchor=north west][inner sep=0.75pt]   [align=left] {$\hat{\mathbf{w}}_{\boldsymbol{\theta}}$};

\draw (59,43) node [anchor=north west][inner sep=0.75pt]   [align=left] {CM estimation};
\draw (199,43) node [anchor=north west][inner sep=0.75pt]   [align=left] {CM regularization};
\draw (351,43) node [anchor=north west][inner sep=0.75pt]   [align=left] {Matrix distance $d$};
\draw (513,43) node [anchor=north west][inner sep=0.75pt]   [align=left] {Optim. on $\mathbb{T}_p$};

\draw (45.08,94) node [anchor=north west][inner sep=0.79pt]   [align=left] {\scriptsize module options in Sec. \ref{sec:cm_est}};
\draw (196.08,94) node [anchor=north west][inner sep=0.79pt]   [align=left] {\scriptsize module options in Sec. \ref{sec:cm_regul}};
\draw (347.08,94) node [anchor=north west][inner sep=0.79pt]   [align=left] {\scriptsize module options in Sec. \ref{sec:matrix_dist}};
\draw (498.08,94) node [anchor=north west][inner sep=0.79pt]   [align=left] {\scriptsize module options in Sec. \ref{sec:mm}-\ref{sec:riem_opt}};

\end{tikzpicture}
    \caption{A modular COFI-PL chain of process: the covariance matrix plug-in $\tilde{\mathbf{\Sigma}}$
    is estimated from the samples $\{\mathbf{x}_i\}_{i=1}^n$ with a possible regularization. 
    The choice of a matrix distance $d$ then specifies the covariance fitting optimization problem \eqref{eq:COFI-PL2}.
    Interferometric phase estimates are obtained by solving this problem using optimization methods on the $n$-torus of phase-only vectors $\mathbb{T}_p$.    
    }
    \label{fig:COFI-PLchain}
\end{figure*}
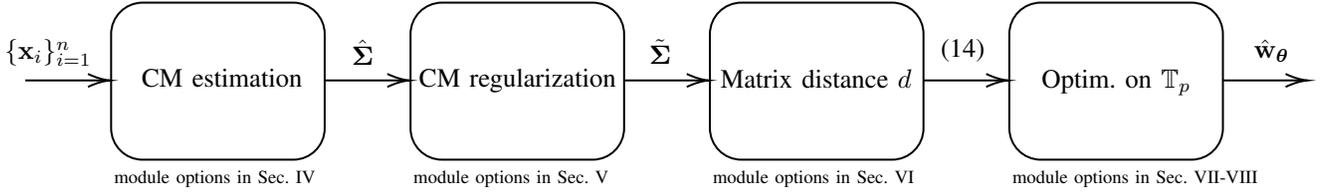

\section{Unstructured covariance matrix estimation}
\label{sec:cm_est}

The first module requires to construct an estimate of the covariance matrix $\hat{\mathbf{\Sigma}}$ from the sample set $\{\mathbf{x}_i\}_{i=1}^n$.
In this setup, it is interesting to link the chosen estimation method to underlying assumptions on the statistical model of the data \cite{deng2017statistical}.
Notably, SAR data can be non-Gaussian \cite{mian2018new, Vu2023robust}, which motivates the use of robust estimation methods.
This section overviews relevant plug-in estimates for SAR data, and their corresponding model assumptions.

\subsection{Sample covariance matrix}

A common assumption in SAR is to consider that each sample is independent and identically distributed according to a centered complex circular Gaussian model, denoted $\mathbf{x}\sim\mathcal{CN}(\mathbf{0},\mathbf{\Sigma})$.
This model is particularly relevant for low resolution SAR images, i.e., when we can assume that each pixel gathers the sum of the contributions from many scatterers, and apply the central limit theorem.
In this case, the pixel patch has the negative log-likelihood
\begin{equation} \label{eq:gauss_likelihood}
    \mathcal{L}_{\mathcal{N}}(\{\mathbf{x}_i\}_{i=1}^n|\mathbf{\Sigma})  \propto {\rm tr} (\mathbf{S}\mathbf{\Sigma}^{-1} ) + \log | \mathbf{\Sigma} |
\end{equation}
with $\mathbf{S}$ as in \eqref{eq:SCM}.
In this case, the maximum likelihood estimator of $\mathbf{\Sigma}$ is the sample covariance matrix $\mathbf{S}$.
This estimate is the most widely used plug-in in IPL \cite{Ferretti2011_squee, Cao2015}.

\subsection{$M$-estimators of the scatter}

At high resolution, the Gaussian assumption is often no longer valid, and empirical histograms tend to exhibit heavy-tails (see, e.g., \cite{mian2018new, Vu2023robust}).
The centered circular complex elliptically symmetric (CES) distributions provides a better fit for such data.
CES models, denoted $\mathbf{x}\sim\mathcal{CES}(\mathbf{0},\mathbf{\Sigma},g)$, correspond to the following negative log-likelihood function:
\begin{equation} \label{eq:ces_likelihood}
\mathcal{L}_{\mathcal{E}}^g
(\{\mathbf{x}_i\}_{i=1}^n|\mathbf{\Sigma})
\propto
\frac{1}{n}\sum_{i=1} \rho (   \mathbf{x}_i^H \mathbf{\Sigma}^{-1} \mathbf{x}_i    )
+ \log | \mathbf{\Sigma} |
\end{equation}
with $\rho(t) = - \log g(t)$, and where $g$ is referred to as the density generator.
The special case $g(t)=e^{-t}$ yields back the Gaussian model, while many other options allow for modelling other heavy-tailed distributions \cite{ollila2012complex}.
An $M$-estimator of the scatter $ \mathbf{\Sigma}_{M}$ is then defined as the solution of the following fixed-point equation 
\begin{equation}
\label{eq:mest}
 \mathbf{\Sigma}_{M} = \frac{1}{n}\sum_{i=1}^{n} u(\mathbf{x}_i^H  \mathbf{\Sigma}_{M}^{-1}\mathbf{x}_i)\mathbf{x}_i\mathbf{x}_i^H
\end{equation}
where $u$ is a real-valued weight function on $[0,\infty)$. 
Such solution exists and is unique under conditions on $u$ and the sample support (notably, $n>p$) discussed in \cite{maronna1976robust, ollila2012complex}.
When $u(t) = - g'(t) / g (t)$,~\eqref{eq:mest} is the maximum likelihood estimator of $\mathbf{\Sigma}$ for $\mathbf{x} \sim \mathcal{CES}(\mathbf{0},\mathbf{\Sigma}, g)$.
Otherwise, $M$-estimators are robust to a mismatch, and we can use a function $u$ that is not necessarily linked to $g$.
Most notably, Tyler's estimator \cite{Tyler1987}, based on the function $u_{T}(t) = p/t$, is distribution free, in the sense that its distribution does not depend on the underlying density generator $g$.
In practice, $M$-estimators still offer better estimation performance compared to the sample covariance matrix as long as the function $u$ assesses for potentially heavy-tailed distribution.
Specifically for IPL, these estimators have been considered as plug-ins in \cite{schmitt2014adaptive, wang2015robust}.

\subsection{Correlation-based plug-in}

First, recall that ${\rm mod} (\mathbf{\Sigma})=\boldsymbol{\Psi} =  {\boldsymbol{\Upsilon}} \circ\boldsymbol{\sigma} \boldsymbol{\sigma}^\top$, where ${\boldsymbol{\Upsilon}}$ is the coherence matrix and $\boldsymbol{\sigma}$ is the vector of standard deviations.
Remark that the scaling by $\boldsymbol{\sigma}$ does not impact the phase structure of $\mathbf{\Sigma}$ in \eqref{eq:cov_struct_1}.
This can motivate the use of a correlation (rather than covariance) matrix estimate $\hat{\mathbf{C}}$ as plug-in, in order to mitigate issues related to the amplitude fluctuation in SAR images (unbalanced backscattered power among all the images).
Such matrix is built from any covariance matrix estimate $\hat{\mathbf{\Sigma}}$ as
\begin{equation} \label{eq:C_standardization}
    \hat{\mathbf{C}} = {\rm diag}(\hat{\mathbf{\Sigma}})^{-1/2} \hat{\mathbf{\Sigma}}\: {\rm diag}(\hat{\mathbf{\Sigma}})^{-1/2}.
\end{equation}
Most notably, the sample correlation matrix, i.e., \eqref{eq:C_standardization} built with $\hat{\mathbf{\Sigma}}=\mathbf{S}$, was successfully leveraged in \cite{Cao2015}, and other works that implicitly standardize the data before computing $\mathbf{S}$.
Still, this estimator can be sensitive to non-Gaussian distributions.
A solution is to turn to robust estimators of the correlation matrix, for which many options exist \cite{shevlyakov2011robust, shevlyakov2016robust}.
We will focus on a simple and highly robust one, that is obtained from the phase-only sample correlation matrix
\begin{equation}
    \mathbf{T} = \frac{1}{n} \sum_{i=1}^n \mathbf{y}_i \mathbf{y}_i^H,
    \label{eq:phase_only_SCM}
\end{equation}
with $\mathbf{y}=\phi (\mathbf{x})$.
As this estimator projects each entry of the samples to the unit sphere, it is inherently robust to any underlying distributions of the modulus of the marginals $x^q,~\forall q\in[\![1,p]\!]$ (cf. \cite{woodbridge2017signal} for an example of such distributions).


\section{Covariance matrix regularization}
\label{sec:cm_regul}

The second module consists in applying a regularization process to the estimator provided by the first one.
Indeed, the sample support $n$ is limited by the size of the sliding window (that sets the spatial resolution).
For long time series, this can lead to situations where $n\simeq p$ or $n<p$, in which the plain estimate $\hat{\mathbf{\Sigma}}$ is inaccurate.
Applying some form of regularization to the estimate in this case greatly improves the accuracy of phase estimation.
This section thus overviews how to construct a regularized estimator $\tilde{\mathbf{\Sigma}}$ from a plug-in estimate $\hat{\mathbf{\Sigma}}$ with different approaches motivated by IPL, and assumptions on SAR data.
These come at the cost of regularization parameter selection, so we also discuss references that address this issue.

\subsection{Shrinkage to identity}

At low sample support, covariance matrix estimates are usually ill conditioned ($n\simeq p$), or even not invertible ($n<p$).
This poses a major issue in IPL because most of the fitting distances are constructed from the inverse of the plug-in estimate (cf. Section  \ref{sec:matrix_dist}).
A practical solution is to operate a shrinkage of the estimate to a scaled identity matrix, in order to leverage some form of bias-variance trade-off.
A popular formulation of such regularization preserves the scale (i.e., trace) of the estimate, and is defined for any plug-in estimate $\hat{\mathbf{\Sigma}}$ as
\begin{equation} \label{eq:LWshrink}
    \tilde{\mathbf{\Sigma}}(\beta)
    = 
    \beta \hat{\mathbf{\Sigma}}
    +
    (1-\beta) \frac{{\rm tr}(\hat{\mathbf{\Sigma}}) }{p} \mathbf{I}
\end{equation}
with $\beta\in [0,1]$.
This regularization is also referred to as spectral shrinkage, as it shrinks eigenvalues of the plug-in estimate towards their mean.
When the plug-in estimate is the sample covariance matrix $\mathbf{S}$, the adaptive selection of the regularization parameter $\beta$ for minimizing the mean squared error has been studied in \cite{Ledoit2004} (assuming finite $4^{\text{th}}$ order moments), \cite{Chen2011} (assuming Gaussian data), and \cite{ollila2019optimal} (assuming elliptically distributed data).
Extension to the $M$-estimators as in \eqref{eq:mest} has been studied in \cite{ollila2020shrinking}.

\subsection{Low-rank approximation}

The empirical spectrum of multivariate SAR pixel patches often exhibits a low-rank structure (see e.g. experiments in \cite{mian2020robust, Vu2023robust}).
This means that most of the data variance lies in a rank-$k$ linear subspace, which justifies the use of principal component analysis\footnote{
In practice the rank $k$ is usually set fixed for the whole image, i.e. for processing all pixel patches indifferently.
Though it comes at a heavy computational cost, this process could be refined by using adaptive rank estimation, e.g., using model order selection methods \citep{stoica2004model}.
} \cite{jolliffe2003principal}.
The corresponding decomposition of the covariance matrix has often been 
used in IPL to improve the quality of the plug-in estimate \cite{Cao2015}.
Let the eigenvalue decomposition of this plug-in estimate be denoted $\hat{\mathbf{\Sigma}} \overset{\rm EVD}{=} \sum_{r=1}^p \hat{\lambda}_r \hat{\mathbf{u}}_r \hat{\mathbf{u}}_r^H$.
The shrinkage approaches that leverage a low-rank structure within this decomposition consider either the rank-$k$ approximation of the plug-in estimate 
\begin{equation} \label{eq:low_rank_approx}
\tilde{\mathcal{P}}_k (\hat{\mathbf{\Sigma}} ) = \sum_{r=1}^k \hat{\lambda}_r \hat{\mathbf{u}}_r \hat{\mathbf{u}}_r^H ,
\end{equation}
or its projection on the set of rank-$k$ plus scaled identity, defined as
\begin{equation}  \label{eq:low_rank_approx2}
{\mathcal{P}}_k (\hat{\mathbf{\Sigma}} ) =
\sum_{r=1}^k \hat{\lambda}_r \hat{\mathbf{u}}_r \hat{\mathbf{u}}_r^H 
+
\sum_{r=k+1}^p \bar{\lambda}_r \hat{\mathbf{u}}_r \hat{\mathbf{u}}_r^H ,
\end{equation}
with $\bar{\lambda}_r = \sum_{r=k+1}^p \hat{\lambda}_r /(p-k)$.
This second option is advocated in this work for two main reasons.
The first is theoretical, as applying $\mathcal{P}_k$ to the sample covariance matrix $\mathbf{S}$ yields the maximum likelihood estimator of a low-rank Gaussian signal plus white Gaussian noise \cite{Tipping1999, kang2014rank}.
The second is practical: the operator $\mathcal{P}_k$ gives an invertible matrix and improves the conditioning of the plug-in estimate by uplifting its lowest eigenvalue to the average of the $p-k$ lowest ones.
Similar to the shrinkage to identity, this property is instrumental to compute distances $d$ whose expressions involve matrix inverses.

\subsection{Covariance matrix tapering}

SAR image stacks suffer from targets decorrelation over time, which is why IPL exploits the redundancy brought by all image pairs in order to construct the interferograms.
However, time frames that exhibit a prohibitively low coherence should intuitively be excluded from this construction, which motivated the development of methods to determine which pairs of images are exploited or disregarded \cite{ansari2020study, shen2023adppl}.
In this scope, a simple idea consists in processing only pairs contained in a sliding temporal window of bandwidth $b$ (where the coherence is assumed to remain high enough).
Within the IPL framework, this translates into forcing a banded structure in the covariance matrix plug-in.
Let $\mathbf{W}(b)$ be a banding-type\footnote{Note that other tapering templates
could be envisioned \cite{cai2010optimal, guerci1999theory}, but we focus only on the most relevant to the considered applications.} tapering matrix with bandwidth $b$ \cite{bickel2008regularized, bickel2008covariance}
\begin{equation}
    [\mathbf{W}(b)]_{ij} = 
    \left\{
    \begin{array}{cl}
        1  & \text{if}~|i-j|\leq b\\
        0  & \text{otherwise}.
    \end{array}
    \right.
\end{equation}
Covariance matrix tapering, also referred to as Hadamard regularization, involves producing the regularized estimate
\begin{equation} \label{eq:tapering}
    \tilde{\mathbf{\Sigma}} (b) = \mathbf{W}(b) \circ \hat{\mathbf{\Sigma}},
\end{equation}
which then naturally exhibits a banded structure.
Adaptive procedures for the bandwidth selection can be found in \cite{bickel2008regularized, bickel2008covariance, ollila2022regularized}.
The joint use of shrinkage to identity and tapering was advocated for IPL in \cite{zwieback2022cheap}.
The optimal (in the sense of mean-squared error) adaptive selection of the parameters $(\beta,b)$ for this regularization was studied in \cite{ollila2022regularized} and applied to IPL in \cite{bai2023lamie}.

\subsection{Joint estimation and regularization}

\label{sec:estimreguljoint}

For completeness, we mention that some covariance matrix plug-ins can be constructed by ``merging'' the estimation and regularization modules.
A main example is that $M$-estimators do not exist for $n<p$, so the regularization cannot be performed afterwards, and has to be included within the robust estimation process.
Regularized $M$-estimators, i.e., minimizers of $\mathcal{L}_{\mathcal{E}}^g
(\{\mathbf{x}_i\}_{i=1}^n|\mathbf{\Sigma})$ in \eqref{eq:ces_likelihood}
plus an additive penalty term, were studied in \cite{wiesel2011unified, sun2014regularized, pascal2014generalized, ollila2014regularized}, and their use for IPL was discussed in \cite{even2018insar}.
Low-rank structured $M$-estimators, i.e., minimizers of $\mathcal{L}_{\mathcal{E}}^g
(\{\mathbf{x}_i\}_{i=1}^n|\mathbf{\Sigma})$ in \eqref{eq:ces_likelihood} under the low-rank structure constraint as in \cite{Tipping1999, kang2014rank}, were investigated \cite{sun2016robust, mian2020robust}.

\section{Matrix distances}
\label{sec:matrix_dist}

The previous sections presented two modules dedicated to the construction of a (possibly regularized) plug-in estimate of the covariance matrix $\tilde{\mathbf{\Sigma}}$.
From this estimate, the third module formulates the optimization problem \eqref{eq:COFI-PL2}, which established a covariance matching type IPL.
This process simply requires to select a matrix distance, for which many options are available.
This section overviews prominent ones, and explores their connections to statistical assumptions and existing IPL methods.

\subsection{Kullback-Leibler divergence and maximum likelihood estimation approaches}

The Kullback-Leibler (KL) divergence measures the dissimilarity between two probability density functions.
Its expression between two centered Gaussian distributions $\mathbf{x} \sim \mathcal{CN}(0, \mathbf{\Sigma}_1)$ and $\mathbf{x}\sim \mathcal{CN}(0, \mathbf{\Sigma}_2)$ is:
\begin{multline}
    \label{eq:KL_gauss}
    {\rm KL}(\mathcal{CN}(0, \mathbf{\Sigma}_1 )~||~\mathcal{CN}(0, \mathbf{\Sigma}_2  ))
\\ = {\rm tr}(\mathbf{\Sigma}_2^{-1}\mathbf{\Sigma}_1) + \log |\mathbf{\Sigma}_2\mathbf{\Sigma}_1^{-1}|-p,
\end{multline}
which provides a matrix divergence between $\mathbf{\Sigma}_1$ and $\mathbf{\Sigma}_2$.
Setting $\mathbf{\Sigma}_1 = \tilde{\mathbf{\Sigma}} $, $\mathbf{\Sigma}_2 = \tilde{\boldsymbol{\Psi}}
\circ \mathbf{w}_{\boldsymbol{\theta}}  \mathbf{w}_{\boldsymbol{\theta}}^H $, keeping $\mathbf{w}_{\boldsymbol{\theta}}$ as the only variable, and simplifying the expression, we obtain the KL-IPL objective function as
\begin{equation} \label{eq:kl_COFI-PL}
f^{\rm KL}_{\tilde{\mathbf{\Sigma}}}( \mathbf{w}_{\boldsymbol{\theta}})  = \mathbf{w}_{\boldsymbol{\theta}}^H
( \tilde{\boldsymbol{\Psi}}^{-1} \circ \tilde{\mathbf{\Sigma}} ) 
\mathbf{w}_{\boldsymbol{\theta}}.
\end{equation}
This objective function is the most widely employed for IPL, especially when choosing the sample covariance matrix as plug-in estimate.
However, the expression \eqref{eq:kl_COFI-PL} with $\tilde{\mathbf{\Sigma}} = \mathbf{S}$ is usually obtained from the perspective of maximum likelihood estimation under the Gaussian model when assuming that ${\rm mod}(\mathbf{\Sigma})$ (coherence matrix scaled by the variance coefficients) is known \cite{Guarnieri2008, Ferretti2011_squee, Cao2015}. 
Owing to similar expressions of the Gaussian log-likelihood in \eqref{eq:gauss_likelihood} and the KL divergence \eqref{eq:KL_gauss} when using $\mathbf{S}$ as the plug-in estimate, the two approaches fall back on the same objective in this case.
There is a slight difference between these approaches when using regularization, which is discussed further in the overview Section \ref{sec:sota}.
A last remark is that the KL divergence is not symmetric, so it is not a proper matrix distance.
Its symmetric counterpart (i.e., inverting the roles of $\mathbf{\Sigma}_1$ and $\mathbf{\Sigma}_2$), and its symmetrized version could also be envisioned as options. 
Still, we focus only on the chosen formulation because of its direct link to the Gaussian maximum likelihood estimator, and well-established IPL algorithms.

\subsection{Frobenius norm and least-squares estimator}

The euclidean distance, also referred to as ``flat metric'', between two symmetric matrices is defined as
\begin{equation}
    d_{E}^2 ( \mathbf{\Sigma}_1, \mathbf{\Sigma}_2) = || \mathbf{\Sigma}_1 - \mathbf{\Sigma}_2 ||^2_F .
\end{equation}
By setting $\mathbf{\Sigma}_1 = \tilde{\mathbf{\Sigma}} $, $\mathbf{\Sigma}_2 = \tilde{\boldsymbol{\Psi}}
\circ \mathbf{w}_{\boldsymbol{\theta}}  \mathbf{w}_{\boldsymbol{\theta}}^H $, while keeping $\mathbf{w}_{\boldsymbol{\theta}}$ as the only variable, and simplifying the expression, we obtain the LS-IPL (for least squares) objective function as
\begin{equation} \label{eq:ls_COFI-PL}
    f^{\rm LS}_{\tilde{\mathbf{\Sigma}}} (\mathbf{w}_{\boldsymbol{\theta}} )
    =
    -2 \mathbf{w}_{\boldsymbol{\theta}}^H
    ( \tilde{\boldsymbol{\Psi}} \circ \tilde{\mathbf{\Sigma}} ) 
    \mathbf{w}_{\boldsymbol{\theta}}
    + {\rm const.},
\end{equation}
This objective function appears as similar to the KL-IPL problem \eqref{eq:kl_COFI-PL}, i.e., a quadratic form on $\mathbb{T}_p$.
However, KL-IPL is formulated as a minimization of the quadratic form involving $ \tilde{\boldsymbol{\Psi}}^{-1} \circ \tilde{\mathbf{\Sigma}} $, while LS-IPL is formulated as a maximization (because of the minus sign in \eqref{eq:ls_COFI-PL}) one involving $ \tilde{\boldsymbol{\Psi}} \circ \tilde{\mathbf{\Sigma}} $.
Though the LS-IPL formulation has driven more research due to its maximum likelihood grounds \cite{minh2023interferometric}, recent studies \cite{vu2023covariance, bai2023lamie} evidenced the practical use of $f^{\rm LS}$.
A notable interest of this objective function is that no covariance matrix inversion is required: besides the reduction of the computational complexity, it can also mitigate inaccuracies arising from the poor conditioning of the plug-in estimate (without requiring any regularization process).

\subsection{Weighted Frobenius norm and EXIP criterion}

The weighted Frobenius norm between two symmetric matrices is defined as:
\begin{equation}
    d_{\rm WF}^2 
    ({\mathbf{\Sigma}}_1,{\mathbf{\Sigma}}_2)
    =
    || 
    \mathbf{\mathbf{H}}^{-1/2}
    (
    {\mathbf{\Sigma}}_1
    -
    {\mathbf{\Sigma}}_2 
    )
    \mathbf{\mathbf{H}}^{-1/2}
    ||_F^2,
\end{equation}
where $\mathbf{H}$ is a whitening-type weight matrix to be fixed.
Among possible choices, we will focus on $\mathbf{H}=\tilde{\mathbf{\Sigma}}$, 
that yields the extended invariance principle (EXIP) and COMET-type estimators, which hold interesting statistical properties \cite{Ottersten1998, Shahbazpanahi2004, Yardibi2010, Stoica2011, Meriaux2017, Mériaux2019, Mériaux2019b, Mériaux2021}.
Such EXIP approach has been used in InSAR in \cite{yunjun2019small, guarnieri2007hybrid, hu2023fim} (though not always directly related to IPL formulations).
For this choice, setting $\mathbf{\Sigma}_1 = \tilde{\mathbf{\Sigma}} $ and $\mathbf{\Sigma}_2 = \tilde{\boldsymbol{\Psi}}
\circ \mathbf{w}_{\boldsymbol{\theta}}  \mathbf{w}_{\boldsymbol{\theta}}^H $ gives the following WLS-IPL (for weighted least squares) objective function
\begin{equation}
\begin{aligned}  
    f^{\rm WLS}_{\tilde{\mathbf{\Sigma}}}
    & =
    || 
    \mathbf{I}
    -
    \tilde{\mathbf{\Sigma}}^{-1/2}
    (\tilde{\boldsymbol{\Psi}}
    \circ \mathbf{w}_{\boldsymbol{\theta}} \mathbf{w}_{\boldsymbol{\theta}}^H) 
    \tilde{\mathbf{\Sigma}}^{-1/2}
    ||_F^2,
\end{aligned}
\end{equation}
which unfortunately, cannot be simplified into a simpler quadratic form as KL-IPL and LS-IPL.


\section{Majorization-minimization on $\mathbb{T}_p$}
\label{sec:mm}

The options discussed in the previous section allow us to construct optimization problems as in \eqref{eq:COFI-PL2} in order to perform IPL.
This section presents majorization-minimization on $\mathbb{T}_p$ in order to address the resolution of these problems.
This optimization framework has been successfully leveraged for KL-IPL \cite{Vu2023robust, Vu2022igarss} and LS-IPL \cite{vu2023covariance}.
Though it is not applicable to all objective functions, its major practical interest lies in the fact that it leads to simple and scalable algorithms.

\subsection{Majorization-minimization}

Majorization-minimization is briefly reviewed here with notations that match the problem \eqref{eq:COFI-PL2} for convenience.
More details on this framework can be found in \cite{hunter2004tutorial, sun2016majorization}. 
We consider an optimization problem of the form 
\begin{equation}
\begin{array}{c l}\label{eq:problem_phases2}
\underset{{\mathbf{w}}\in\mathbb{T}_p}{\rm minimize}
& 
f(\mathbf{w}) .
\end{array}
\end{equation}
The majorization-minimization algorithm is an iterative optimization procedure that operates with two steps:
\begin{enumerate}
    \item \textit{Majorization}: at current point $\mathbf{w}_t$, find a surrogate function $g(\cdot |\mathbf{w}_t)$ so that it is tangent to the objective, $f(\mathbf{w}_t) = g(\mathbf{w}_t | \mathbf{w}_t)$, and majorizes it, i.e.,
    \begin{equation}
        f(\mathbf{w})\leq  g(\mathbf{w} | \mathbf{w}_t),~ \forall \mathbf{w} \in \mathbb{T}_p
    \end{equation}

    \item \textit{Minimization}: obtain the next iterate as 
    \begin{equation}
        \mathbf{w}_{t+1} = {\rm argmin}_{\mathbf{w}\in \mathbb{T}_p } ~ g(\mathbf{w} | \mathbf{w}_t).
    \end{equation}
\end{enumerate}
This algorithm enjoys nice convergence properties \cite{razaviyayn2013unified} (being constrained to the compact set $\mathbb{T}_p$ can be accounted for with the same arguments as in \cite{soltanalian2014designing, breloy2021majorization}).
It notably ensures a monotonic decrement of the objective function at each step.
The main interest of this approach is that it can yield a sequence of sub-problems that are easily solved if the surrogate functions are well constructed.

\subsection{Lemmas for quadratic forms of phase only complex vectors}

This section presents useful surrogates functions and their closed form minimizer for quadratic forms on $\mathbb{T}_p$.
Let $\mathbf{H}\succcurlyeq \mathbf{0}$ be a hermitian positive semi-definite matrix, we have the following lemmas.

\begin{lemma} \label{lemma:ccvqf}
    The concave quadratic form $f:\mathbf{w} \mapsto - \mathbf{w}^H \mathbf{H} \mathbf{w}$ is majorized at point $\mathbf{w}_t$ by the surrogate function
    \begin{equation}
        g ( \mathbf{w} | \mathbf{w}_t )  = 
        - 2 \mathfrak{Re} 
        \{ \mathbf{w}^H 
        \underbrace{\mathbf{H} \mathbf{w}_t }_{\tilde{\mathbf{w}}_t}
        \}
        + {\rm const. }
\end{equation}
with equality at point $\mathbf{w}_t$.
\end{lemma}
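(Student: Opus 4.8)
The plan is to exploit concavity of $f$ via a first-order (supporting hyperplane) argument, which is the standard route for majorizing concave functions. Since $\mathbf{H}\succcurlyeq\mathbf{0}$, the map $\mathbf{w}\mapsto\mathbf{w}^H\mathbf{H}\mathbf{w}$ is convex on $\mathbb{C}^p$ (viewed as $\mathbb{R}^{2p}$), hence $f(\mathbf{w})=-\mathbf{w}^H\mathbf{H}\mathbf{w}$ is concave. A concave function lies below each of its tangent planes, so the linearization of $f$ at $\mathbf{w}_t$ is a valid global majorizer. The bulk of the work is just to write that linearization explicitly in complex notation and check the tangency condition.

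First I would compute the difference $f(\mathbf{w})-f(\mathbf{w}_t)$ directly. Expanding,
\begin{equation}
-\mathbf{w}^H\mathbf{H}\mathbf{w} + \mathbf{w}_t^H\mathbf{H}\mathbf{w}_t = -(\mathbf{w}-\mathbf{w}_t)^H\mathbf{H}(\mathbf{w}-\mathbf{w}_t) - 2\mathfrak{Re}\{\mathbf{w}^H\mathbf{H}\mathbf{w}_t\} + 2\mathbf{w}_t^H\mathbf{H}\mathbf{w}_t,
\end{equation}
so that
\begin{equation}
f(\mathbf{w}) = -2\mathfrak{Re}\{\mathbf{w}^H\mathbf{H}\mathbf{w}_t\} + \mathbf{w}_t^H\mathbf{H}\mathbf{w}_t - (\mathbf{w}-\mathbf{w}_t)^H\mathbf{H}(\mathbf{w}-\mathbf{w}_t).
\end{equation}
Since $\mathbf{H}\succcurlyeq\mathbf{0}$, the last term $(\mathbf{w}-\mathbf{w}_t)^H\mathbf{H}(\mathbf{w}-\mathbf{w}_t)\geq 0$, hence
\begin{equation}
f(\mathbf{w})\;\leq\; -2\mathfrak{Re}\{\mathbf{w}^H\mathbf{H}\mathbf{w}_t\} + \mathbf{w}_t^H\mathbf{H}\mathbf{w}_t \;=\; g(\mathbf{w}\,|\,\mathbf{w}_t),
\end{equation}
for all $\mathbf{w}\in\mathbb{C}^p$, and in particular for all $\mathbf{w}\in\mathbb{T}_p$. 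This identifies the constant as $\mathbf{w}_t^H\mathbf{H}\mathbf{w}_t$ and matches the claimed form with $\tilde{\mathbf{w}}_t=\mathbf{H}\mathbf{w}_t$.

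Finally I would verify tangency: setting $\mathbf{w}=\mathbf{w}_t$ in $g$ gives $-2\mathfrak{Re}\{\mathbf{w}_t^H\mathbf{H}\mathbf{w}_t\}+\mathbf{w}_t^H\mathbf{H}\mathbf{w}_t = -\mathbf{w}_t^H\mathbf{H}\mathbf{w}_t = f(\mathbf{w}_t)$, using that $\mathbf{w}_t^H\mathbf{H}\mathbf{w}_t$ is real (as $\mathbf{H}$ is hermitian). This completes the two required properties of a majorizer. There is no real obstacle here; the only point requiring a little care is the bookkeeping of the real-part operator and the verification that the quadratic remainder term is genuinely nonnegative, which is exactly where the hypothesis $\mathbf{H}\succcurlyeq\mathbf{0}$ enters — it is essential and cannot be dropped.
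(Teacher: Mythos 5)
Your proof is correct and follows essentially the same route as the paper, which simply invokes that a concave function lies below its tangent planes and is therefore majorized by its first-order Taylor expansion. Your explicit completion-of-the-square computation is just the concrete verification of that tangent-plane inequality, with the $\mathbf{H}\succcurlyeq\mathbf{0}$ hypothesis entering exactly where the paper's appeal to concavity does.
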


\begin{proof}
    A concave function lies below its tangent curves, so it can be majorized at any point by its first order Taylor expansion.
\end{proof}

\begin{lemma} \label{lemma:cvxqf}
    The convex quadratic form $f:\mathbf{w} \mapsto \mathbf{w}^H \mathbf{H} \mathbf{w}$ is majorized on $\mathbb{T}_p$ at point $\mathbf{w}_t$ by the surrogate function 
    \begin{equation}
g ( \mathbf{w} | \mathbf{w}_t )  = 
2 \mathfrak{Re} 
\{ \mathbf{w}^H 
\underbrace{(\mathbf{H} - \lambda_{\rm max}^{\mathbf{H}} \mathbf{I} ) \mathbf{w}_t }_{-\tilde{\mathbf{w}}_t}
\}
+ {\rm const. }
\label{eq:linmajofquadcvx}
\end{equation}
where $\lambda_{\rm max}$ be the largest eigenvalue of $\mathbf{H}$, and with equality achieved at $\mathbf{w}_t$.
\end{lemma}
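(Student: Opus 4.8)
The plan is to reduce Lemma \ref{lemma:cvxqf} to the already-proven Lemma \ref{lemma:ccvqf} by exploiting the fact that the feasible set $\mathbb{T}_p$ forces $\mathbf{w}^H \mathbf{w} = p$ to be constant. First I would introduce the shifted matrix $\tilde{\mathbf{H}} = \lambda_{\rm max}^{\mathbf{H}} \mathbf{I} - \mathbf{H}$, which is hermitian positive semi-definite precisely because $\lambda_{\rm max}^{\mathbf{H}}$ is the largest eigenvalue of $\mathbf{H}$. On $\mathbb{T}_p$ one has the identity $\mathbf{w}^H \mathbf{H} \mathbf{w} = -\mathbf{w}^H \tilde{\mathbf{H}} \mathbf{w} + \lambda_{\rm max}^{\mathbf{H}} \mathbf{w}^H \mathbf{w} = -\mathbf{w}^H \tilde{\mathbf{H}} \mathbf{w} + \lambda_{\rm max}^{\mathbf{H}} p$, so the convex quadratic form $f$ agrees, up to the additive constant $\lambda_{\rm max}^{\mathbf{H}} p$, with the concave quadratic form $\mathbf{w} \mapsto -\mathbf{w}^H \tilde{\mathbf{H}} \mathbf{w}$ over the torus.

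Next I would apply Lemma \ref{lemma:ccvqf} with $\tilde{\mathbf{H}} \succcurlyeq \mathbf{0}$ in place of $\mathbf{H}$: the concave form $\mathbf{w} \mapsto -\mathbf{w}^H \tilde{\mathbf{H}} \mathbf{w}$ is majorized at $\mathbf{w}_t$ by $-2\mathfrak{Re}\{\mathbf{w}^H \tilde{\mathbf{H}} \mathbf{w}_t\} + {\rm const.}$, with equality at $\mathbf{w}_t$. Substituting back $\tilde{\mathbf{H}} = \lambda_{\rm max}^{\mathbf{H}} \mathbf{I} - \mathbf{H}$ turns this into $-2\mathfrak{Re}\{\mathbf{w}^H(\lambda_{\rm max}^{\mathbf{H}} \mathbf{I} - \mathbf{H})\mathbf{w}_t\} = 2\mathfrak{Re}\{\mathbf{w}^H(\mathbf{H} - \lambda_{\rm max}^{\mathbf{H}} \mathbf{I})\mathbf{w}_t\}$, which is exactly the surrogate $g(\mathbf{w}|\mathbf{w}_t)$ of \eqref{eq:linmajofquadcvx} up to an additive constant. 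Re-absorbing the constant $\lambda_{\rm max}^{\mathbf{H}} p$ from the first step into the ${\rm const.}$ term affects neither the majorization inequality $f(\mathbf{w}) \le g(\mathbf{w}|\mathbf{w}_t)$ on $\mathbb{T}_p$ nor the tangency $f(\mathbf{w}_t) = g(\mathbf{w}_t|\mathbf{w}_t)$, which concludes the proof.

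I do not expect any genuine obstacle here; the only point requiring care is the bookkeeping of the constant terms, and making explicit that the reduction to a concave form is legitimate only because the quadratic penalty $\mathbf{w}^H \mathbf{w}$ is constant on $\mathbb{T}_p$ (the argument would fail on $\mathbb{C}^p$). An equivalent self-contained route, should one prefer to avoid invoking Lemma \ref{lemma:ccvqf}, is to note that $\mathbf{H} - \lambda_{\rm max}^{\mathbf{H}} \mathbf{I} \preccurlyeq \mathbf{0}$, so $\mathbf{w} \mapsto \mathbf{w}^H(\mathbf{H} - \lambda_{\rm max}^{\mathbf{H}} \mathbf{I})\mathbf{w}$ is concave and therefore bounded above by its first-order Taylor expansion at $\mathbf{w}_t$, and then add the constant $\lambda_{\rm max}^{\mathbf{H}} \mathbf{w}^H \mathbf{w} = \lambda_{\rm max}^{\mathbf{H}} p$; either way the proof is a one-line argument.
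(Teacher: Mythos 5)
Your proposal is correct and follows essentially the same route as the paper: shift by $\lambda_{\rm max}^{\mathbf{H}}\mathbf{I}$ so that the quadratic form becomes concave and differs from $f$ only by the constant $p\,\lambda_{\rm max}^{\mathbf{H}}$ on $\mathbb{T}_p$, then majorize the concave form by its first-order Taylor expansion (Lemma~\ref{lemma:ccvqf}). Your explicit remark that the reduction hinges on $\mathbf{w}^H\mathbf{w}=p$ being constant on the torus is exactly the point the paper's proof makes with equation~\eqref{eq:cvx_ccv_relation}.
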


\begin{proof}
We first notice that, when $\mathbf{w}\in\mathbb{T}_p$, we have the relation
\begin{equation} \label{eq:cvx_ccv_relation}
\mathbf{w}^H (\mathbf{H} - \lambda_{\rm max}^{\mathbf{H}} \mathbf{I} ) \mathbf{w} 
=
\mathbf{w}^H \mathbf{H} \mathbf{w} -
\underbrace{p \lambda_{\rm max}^{\mathbf{H}}}_{ {\rm const.}}.
\end{equation}
Hence, the objective function restricted to $\mathbb{T}_p$ coincides with the expression of a concave quadratic form up to a constant.
Optimizing either side of the equality in \eqref{eq:cvx_ccv_relation} over $\mathbb{T}_p$ thus yields the same solution.
The quadratic form $\mathbf{w}^H (\mathbf{H} - \lambda_{\rm max}^{\mathbf{H}} \mathbf{I} ) \mathbf{w} $ is concave, thus it can be majorized at point $\mathbf{w}_t$ by its first order Taylor expansion as in \eqref{eq:linmajofquadcvx}.
\end{proof}
The two previous lemmas show that any quadratic form can be majorized on $\mathbb{T}_p$ by linear functions.
For these linear surrogates functions, the minimization step on $\mathbb{T}_p$ can be solved in closed form thanks to the following lemma.

\begin{lemma} \label{lemma:min_lin_torus}
The solution to the minimization problem
\begin{equation}
\begin{array}{c l}\label{eq:linpr_torus}
\underset{{\mathbf{w}}\in\mathbb{T}_p}{\rm minimize}
& 
- \mathfrak{Re} \left\{ \mathbf{w}^H \bar{\mathbf{w}}_t \right\}
\end{array}
\end{equation}
is obtained as
\begin{equation}
    \mathbf{w}^\star = \phi_\mathbb{T}( \bar{\mathbf{w}}_t )
\end{equation}
with the operator $\phi_\mathbb{T} : x {=} r e^{i\theta} \mapsto e^{i\theta}$ (that
extends naturally to matrices by being applied entry-wise). 
\end{lemma}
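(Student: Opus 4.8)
The plan is to work entry-wise and exploit the fact that the objective in \eqref{eq:linpr_torus} decouples completely across the coordinates of $\mathbf{w}$. First I would write $\mathfrak{Re}\{\mathbf{w}^H\bar{\mathbf{w}}_t\} = \sum_{q=1}^p \mathfrak{Re}\{\overline{[\mathbf{w}]_q}\,[\bar{\mathbf{w}}_t]_q\}$, so that minimizing $-\mathfrak{Re}\{\mathbf{w}^H\bar{\mathbf{w}}_t\}$ over $\mathbb{T}_p$ amounts to maximizing each term $\mathfrak{Re}\{\overline{[\mathbf{w}]_q}\,[\bar{\mathbf{w}}_t]_q\}$ independently subject to the single constraint $|[\mathbf{w}]_q|=1$.

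Next, for a fixed index $q$, write $[\mathbf{w}]_q = e^{i\varphi}$ and $[\bar{\mathbf{w}}_t]_q = \varrho_q e^{i\psi_q}$ with $\varrho_q = |[\bar{\mathbf{w}}_t]_q|\ge 0$. Then $\mathfrak{Re}\{\overline{[\mathbf{w}]_q}\,[\bar{\mathbf{w}}_t]_q\} = \varrho_q\cos(\psi_q-\varphi)$, which is maximized (over $\varphi$) by the choice $\varphi=\psi_q$, giving the maximal value $\varrho_q$. This corresponds exactly to $[\mathbf{w}]_q = e^{i\psi_q} = \phi_\mathbb{T}([\bar{\mathbf{w}}_t]_q)$. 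Collecting these componentwise optima yields $\mathbf{w}^\star = \phi_\mathbb{T}(\bar{\mathbf{w}}_t)$, which by construction lies in $\mathbb{T}_p$; a one-line Cauchy--Schwarz-style bound $-\mathfrak{Re}\{\mathbf{w}^H\bar{\mathbf{w}}_t\} \ge -\sum_q \varrho_q = -\mathfrak{Re}\{(\mathbf{w}^\star)^H\bar{\mathbf{w}}_t\}$ confirms global optimality over the whole torus.

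The only mild subtlety, which I would mention rather than belabor, is the case $\varrho_q = 0$ (i.e. $[\bar{\mathbf{w}}_t]_q = 0$): then the $q$-th term contributes nothing regardless of $[\mathbf{w}]_q$, so the minimizer is not unique and $\phi_\mathbb{T}$ may be defined by an arbitrary convention (e.g. mapping $0$ to $1$) without affecting the objective value. Apart from this degenerate situation the minimizer is unique. There is no real obstacle here — the result is essentially immediate once the separability is observed — so the ``hard part'' is merely presenting the decoupling cleanly and handling the zero-modulus edge case gracefully.
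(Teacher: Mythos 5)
Your proof is correct and follows essentially the same route as the paper's: the paper's one-line argument also decouples the objective into $p$ independent inner products and aligns the phase of each entry with that of $[\bar{\mathbf{w}}_t]_q$. Your explicit $\varrho_q\cos(\psi_q-\varphi)$ computation and the remark on the $\varrho_q=0$ degenerate case are just a more detailed write-up of the same idea.
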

\begin{proof}
    The problem requires maximizing the sum of $p$ independent inner products in $\mathbb{C}$ of the form $\langle {e}^{i\theta_j}, [\bar{\mathbf{w}}_t]_j \rangle$, which is solved in closed-from by aligning the phase of each entry as $\theta_j = {\rm arg} ([\tilde{\mathbf{w}}_t]_j) $.
\end{proof}

\subsection{Application to KL-IPL and LS-IPL}

We can remark that LS-IPL and KL-IPL are formulated as minimization problems over $\mathbb{T}_p$ whose objective functions are quadratic forms, i.e., as instances of a generic problem
\begin{equation} \label{eq:generif_torus_qf}
    \begin{array}{c l} 
\underset{{\mathbf{w}}\in\mathbb{T}_p}{\rm minimize}
& 
\mathbf{w}^H \mathbf{M} \mathbf{w}
\end{array}
\end{equation}
where
\begin{itemize}
    \item KL-IPL involves the matrix $\mathbf{M}_{\rm KL}= \tilde{\mathbf{\Psi}}^{-1} \circ \tilde{\mathbf{\Sigma}} $, that is generally positive semi-definite (cf. remark below). 
    Thus we can use Lemma \ref{lemma:cvxqf} and Lemma \ref{lemma:min_lin_torus} in order to obtain the algorithm summarized in the box Algorithm \ref{algo:MM_KLPL}.

    \item LS-IPL involves the matrix $\mathbf{M}_{\rm LS}= - \tilde{\mathbf{\Psi}} \circ \tilde{\mathbf{\Sigma}}$, that is negative semi-definite (cf. remark below).
    Thus we can use Lemma \ref{lemma:ccvqf} and Lemma \ref{lemma:min_lin_torus} in order to obtain the algorithm summarized in the box Algorithm \ref{algo:MM_LSPL}.
    
\end{itemize}

\noindent
\textit{Remark}:
The Hadamard product of two positive semi-definite matrices is semi-definite.
However, the modulus of a positive semi-definite matrix is not guaranteed to stay positive semi-definite (some counterexamples can be found numerically).
It means that $\mathbf{M}_{\rm KL}$ (resp. $\mathbf{M}_{\rm LS}$) is not necessarily positive semi-definite (resp. negative semi-definite) by construction.
Though these corner cases were not actually observed in practice, it can be interesting to implement some safeguards checks.
As remedy, it is always possible to shift the eigenvalues to ensure positiveness, as done in Lemma \ref{lemma:cvxqf}.
The other option is to split the negative and positive part of the matrix, i.e., decomposing $\mathbf{M} = \mathbf{M}_+  + \mathbf{M}_-$, with $\mathbf{M}_+\succcurlyeq \mathbf{0}$ and $\mathbf{M}_- \preccurlyeq  \mathbf{0}$ by eigenvalue decomposition, before applying Lemma 1 and 2 separately to each resulting quadratic form.

\alglanguage{pseudocode}
\begin{algorithm}[!t]
\caption{Majorization-minimization for KL-IPL}
\label{algo:MM_KLPL}
\begin{algorithmic}[1]
\State \textbf{Entry:} $\tilde{\mathbf{\Sigma}} \in \mathbb{C}^{p\times p}$ (plug-in), $\mathbf{w}_1 \in \mathbb{T}_p$ (starting point)
\State Compute $\mathbf{M} = {\rm mod} (\tilde{\mathbf{\Sigma}} )^{-1} \circ \tilde{\mathbf{\Sigma}}  $  and $\lambda_{\rm max}^{\mathbf{M}}$
\Repeat 
\State Compute ${\bar{\mathbf{w}}_t} = (\lambda_{\rm max}^{\mathbf{M}} \mathbf{I} - \mathbf{M} ) \mathbf{w}_t $
\State Update $\mathbf{w}_t = \phi_{\mathbb{T}} \{ \bar{\mathbf{w}}_t \} $
\State $t=t+1$
\Until Convergence
\State \textbf{Output:} $\hat{\mathbf{w}}_{\boldsymbol{\theta}} = \mathbf{w}_{\rm end} \in \mathbb{T}_p$ 

\Statex
\end{algorithmic}
\vspace{-0.4cm}%
\end{algorithm}
\normalsize 

\alglanguage{pseudocode}
\begin{algorithm}[!t]
\caption{Majorization-minimization for LS-IPL}
\label{algo:MM_LSPL}
\begin{algorithmic}[1]
\State \textbf{Entry:} $\tilde{\mathbf{\Sigma}} \in \mathbb{C}^{p\times p}$ (plug-in), $\mathbf{w}_1 \in \mathbb{T}_p$ (starting point)
\State Compute $\mathbf{M} = {\rm mod}(\tilde{\mathbf{\Sigma}} ) \circ \tilde{\mathbf{\Sigma}} $
\Repeat 
\State Compute ${\bar{\mathbf{w}}_t} =  \mathbf{M} \mathbf{w}_t $
\State Update $\mathbf{w}_t = \phi_{\mathbb{T}} \{ \bar{\mathbf{w}}_t \} $
\State $t=t+1$
\Until Convergence
\State \textbf{Output:} $\hat{\mathbf{w}}_{\boldsymbol{\theta}} = \mathbf{w}_{\rm end} \in \mathbb{T}_p$ 

\Statex
\end{algorithmic}
\vspace{-0.4cm}%
\end{algorithm}

\subsection{EMI-type EVD relaxations}

The structure of the problem in \eqref{eq:generif_torus_qf} is reminiscent of the computation of an eigenvector.
Motivated by this analogy, the EMI algorithm \cite{ansari2018efficient} relaxed the constraint $\mathbf{w} \in \mathbb{T}_p$ into the following problem:
\begin{equation}   \label{eq:EMIrelax}
    \begin{array}{c l} 
\underset{{\mathbf{w}}}{\rm minimize}
& 
\mathbf{w}^H \mathbf{M} \mathbf{w} \\
{\rm subject~to} & \mathbf{w}^H \mathbf{w} = 1 .
\end{array}
\end{equation}
The solution to this problem is the eigenvector associated with the lowest (resp. largest) eigenvalue of $\mathbf{M}$ if it is hermitian positive (resp., negative) definite.
Thus it can be easily computed, and the corresponding phase estimates are then identified directly from the complex phases of the entries of this eigenvector.
In general, this relaxed estimator is not as accurate as the one brought by the proper resolution of \eqref{eq:generif_torus_qf}.
However, it has been evidenced to be satisfactory in some practical cases \cite{ansari2018efficient, bai2023lamie}, justifying its use in terms of performance versus computational load trade-off.


\section{Riemannian optimization on $\mathbb{T}_p$}
\label{sec:riem_opt}

Though majorization-minimization provides simple and elegant algorithms, its application is limited to IPL objective functions that are based on quadratic forms.
Because $\mathbb{T}_p$ \eqref{eq:phaseonlytorus} is a smooth manifold, more general objective functions can be handled with the theory of Riemannian optimization \cite{absil2008optimization, boumal2023introduction}.
This framework has been leveraged for optimization on $\mathbb{T}_p$ in beamforming application \cite{smith1999optimum, elmossallamy2021ris, xiong2023mimo}. However, to the best of our knowledge, it has not been considered within the context of IPL.
Its main interest compared to direct optimization of the argument vector $\boldsymbol{\theta}$ is that it greatly simplifies the calculations, while inherently accounting for the geometry and invariances of the considered space.
This section presents basic tools that allow developing first-order based methods such as the Riemannian counterparts of steepest descent, conjugate gradient, or Broyden–Fletcher–Goldfarb–Shanno (BFGS) algorithms.

The torus of phase-only complex vector $\mathbb{T}_p$ in \eqref{eq:phaseonlytorus}
is a compact smooth manifold embedded in $\mathbb{C}^p$.
Its tangent space at point $\mathbf{w}$ is denoted $T_{\mathbf{w}} \mathbb{T}_p$, and defined 
\begin{equation}
    \label{eq:tangent_torus}
    T_{\mathbf{w}} \mathbb{T}_p = 
    \left\{ 
    \boldsymbol{\xi} \in \mathbb{C}^p
    ~|~
    \mathfrak{Re}\{ \boldsymbol{\xi} \circ \mathbf{w}^* \} = \mathbf{0}
    \right\}
\end{equation}
which can intuitively be recovered by identifying it to a product of tangent space of $p$ complex circles (representing each entry of the vector $\mathbf{w})$.
Then $\mathbb{T}_p$ can be turned into a Riemannian submanifold \cite[Chap. 3]{boumal2023introduction} by endowing each tangent space with the Euclidean metric
\begin{equation}
    \label{eq:metric_torus}
    \begin{array}{l c c l}
         \langle \cdot , \cdot \rangle_\mathbf{w} : & T_{\mathbf{w}} \mathbb{T}_p \times T_{\mathbf{w}} \mathbb{T}_p &\rightarrow& \mathbb{R}  \\
         & \boldsymbol{\xi}, \boldsymbol{\eta} & \mapsto & \mathfrak{Re}\{ \boldsymbol{\xi}^H \boldsymbol{\eta}  \}
    \end{array} 
\end{equation}
We then consider optimization problem of the form \eqref{eq:COFI-PL2}, that has no obvious closed-form solution on $\mathbb{T}_p$.
In order to evaluate this solution, we resort to iterative
methods, i.e., methods that yield a sequence of iterates $\mathbf{w}_t\in \mathbb{T}_p$ from a starting point $\mathbf{w}_0 \in \mathbb{T}_p$.
This sequence is constructed so that it eventually converges to
a critical point of the objective in (24). 
When the variable is constrained to lie in the manifold $\mathbb{T}_p$, first-order based Riemannian optimization methods operate as follows:
\begin{enumerate}
    \item At iterate $\mathbf{w}_t$, a descent direction in the tangent space, denoted $\boldsymbol{\xi}_t\in T_{\mathbf{w}} \mathbb{T}_p$, is computed by leveraging the Riemannian gradient of the objective function.

    \item The direction descent  $\boldsymbol{\xi}_t$ is used to obtain the next iterate $\mathbf{w}_{t+1}$. This is achieved through a retraction on $\mathbb{T}_p$, which is an operator that maps tangent vectors back onto the manifold. 
\end{enumerate}
This generic procedure is illustrated in Figure \ref{fig:riem_opt_torus}.

For the first step, the steepest direction descent is given by the Riemannian gradient, which is defined according to the metric \eqref{eq:metric_torus}.
The Riemannian gradient of a function $f: \mathbb{T}_p \rightarrow \mathbb{R} $ at point $\mathbf{w}_t$ is the unique vector defined as
\begin{equation}
   \langle {\rm grad} f(\mathbf{w}_t) , \boldsymbol{\xi} \rangle_{\mathbf{w}_t} = 
   {\rm D} f(\mathbf{w}_t ) \left[ \boldsymbol{\xi} \right],
\end{equation}
where ${\rm D} f(\mathbf{w}_t ) \left[ \boldsymbol{\xi} \right]$ is the directional derivative of $f$ with respect to $\mathbf{w}_t$ in direction $\boldsymbol{\xi} $.
This directly yields the Riemannian gradient as the orthogonal projection of the Euclidean gradient onto the tangent space, i.e.:
\begin{equation} \label{eq:riemgrad}
     {\rm grad} f(\mathbf{w}_t)
     = 
    \nabla f(\mathbf{w}_t) - \mathfrak{Re} \{ \nabla f(\mathbf{w}_t)^* \circ \mathbf{w}_t \} \circ \mathbf{w}_t,
\end{equation}
where $\nabla f(\mathbf{w}_t)$ is the Euclidean gradient of the objective
function at $\mathbf{w}_t$.
Momentum-based, or acceleration methods generally combine the gradient of several iterates.
Because these objects belong to different tangent spaces in our context, we need 
to meaningfully map tangent vectors from one point to another.
This is achieved thanks to parallel transport \cite[ref section]{boumal2023introduction}, which is obtained in our context with 
\begin{equation}
    \mathcal{T}_{\mathbf{w}_{t}\leftarrow\mathbf{w}_{t-1}}^{\mathbb{T}_p}
    (\boldsymbol{\xi})
    = \boldsymbol{\xi} - \mathfrak{Re}( \boldsymbol{\xi} \circ \mathbf{w}_t^* ) \circ \mathbf{w}_t.
\end{equation}
where $\mathcal{T}_{\mathbf{w}_{t}\leftarrow\mathbf{w}_{t-1}}^{\mathbb{T}_p}$ is used to denote the transport of a tangent vector $\boldsymbol{\xi}\in T_{\mathbf{w}_{t-1}} \mathbb{T}_p$ to the tangent space $T_{\mathbf{w}_{t}} \mathbb{T}_p$.

Given a direction descent $\boldsymbol{\xi}_t \in T\mathbf{w}_t\mathbb{T}_p $, 
the second step can be performed by defining a retraction, i.e., an operator $\mathcal{R}_{\mathbf{w}_t}: T\mathbf{w}_t \mathbb{T}_p \to \mathbb{T}_p $ that maps tangent vectors back to the manifold.
In our case, the euclidean projection on $\mathbb{T}_p$ is a practical candidate, so we can use
\begin{equation} \label{eq:retraction}
    \mathcal{R}_{\mathbf{w}_t} ({\boldsymbol{\xi}}_t) = \phi_\mathbb{T}(\mathbf{w}_t + \boldsymbol{\xi}_t ),
\end{equation}
where $\phi_\mathbb{T}$ was defined after \eqref{eq:mod_arg_decomp}.

The gradient, parallel transport, and retraction are enough to derive most first-order based algorithms by transposing from standard optimization methods.
As an example, the Riemannian conjugate gradient is given in the box Algorithm \ref{algo:Riemconjgrad}, while the Euclidean gradient for covariance fitting cost computed with LS and KL can be obtained from the relations:
\begin{equation}
    \nabla  \mathbf{w}^H \mathbf{M} \mathbf{w}  = 2 \mathbf{M} \mathbf{w},
\end{equation}
for all $\mathbf{M}\in \mathcal{H}_p$. For the WLS cost, the derivation is a little bit tricky and leads to 
\begin{equation}
\nabla f^{\rm WLS}_{\tilde{\mathbf{\Sigma}}} = 4 {\rm diag}(\mathbf{M} {\rm diag}(\mathbf{w})^H \mathbf{M}-\mathbf{M}),
\end{equation}
where $\mathbf{M}=\tilde{\boldsymbol{\Sigma}}^{-1}{\rm diag}(\mathbf{w})\tilde{\boldsymbol{\Sigma}}$ and {\rm diag} gives a diagonal matrix when the entry is a vector and a vector when the entry is a diagonal matrix. 

\alglanguage{pseudocode}
\begin{algorithm}[!t]
\caption{Riemannian conjugate gradient on $\mathbb{T}_p$}
\label{algo:Riemconjgrad}
\begin{algorithmic}[1]
\State \textbf{Entry:} Objective $f$ of problem \eqref{eq:COFI-PL2}, starting point $\mathbf{w}_0$
\Repeat 
\State Compute $\boldsymbol{\eta}_t = - {\rm grad} f(\mathbf{w}_t)$ with \eqref{eq:riemgrad}
\State Compute $\alpha_t$ and $\beta_t$ with \cite{hestenes1952methods}
\State Set direction $\boldsymbol{\xi}_t = \alpha_t \boldsymbol{\eta}_t + \beta_t \mathcal{T}_{\mathbf{w}_{t}\leftarrow\mathbf{w}_{t-1}}^{\mathbb{T}_p} (  \boldsymbol{\eta}_{t-1}) $
\State Update $\mathbf{w}_t = \mathcal{R}_{\mathbf{w}_t} ({\boldsymbol{\xi}}_t)$ with \eqref{eq:retraction}
\Until Convergence
\State \textbf{Output:} $\hat{\mathbf{w}}_{\boldsymbol{\theta}} = \mathbf{w}_{\rm end} \in \mathbb{T}_p$ 

\Statex
\end{algorithmic}
\vspace{-0.4cm}%
\end{algorithm}


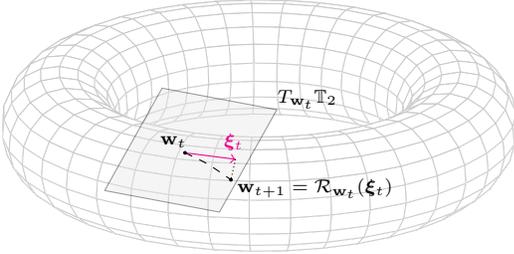
\begin{figure}[!t]
    \centering
    \begin{tikzpicture}
        
    \begin{axis}[
        axis equal image,
        hide axis,
        z buffer = sort,
        view = {0}{20},
        scale = 1
        ]
        
        \addplot3[
            surf,
            shader    = faceted interp,
            samples   = 20,
            samples y = 40,
            domain    = 0:2*pi,
            domain y  = 0:2*pi,
            colormap name = justblackandwhite, thin
        ] (
            {(3.5+sin(deg(\x)))*cos(deg(\y))},
            {(3.5+sin(deg(\x)))*sin(deg(\y))},
            {cos(deg(\x))}
        );

    	\addplot3[mark=*,mark size=0.5pt, only marks,draw=black,fill=black] coordinates {
        (-1.3, 0, -0.5) 
        (-0.5,    0, -1) 
	    };

        \draw [fill=gray!20,opacity=0.4] 
        (-2.7,0,-1.2) -- (-0.7,0,-1.6) -- (0.3,0,0.3) -- (-1.7,0,0.7) -- (-2.7,0,-1.2);

        \draw [dashed]
        (-1.3, 0, -0.5) .. controls (-0.9,0,-0.7) ..  (-0.5,    0, -1);

       \draw [->, color = magenta] (-1.3, 0, -0.5) -- (-0.42, 0, -0.62);

       \draw [densely dotted] (-0.42, 0, -0.62) -- (-0.5,    0, -1);

    \node[font=\scriptsize, anchor=south west] at (0.15,0,0.15) {${T}_{\mathbf{w}_t} \mathbb{T}_2$};
    \node[font=\scriptsize] at (-1.5, 0, -0.3) {$\mathbf{w}_t$};
    \node[font=\scriptsize] at (-0.45, 0, -0.3) {\textcolor{magenta}{$\boldsymbol{\xi}_t$}};
    \node[font=\scriptsize, anchor=north west] at (-0.55, 0,-0.8) {$\mathbf{w}_{t+1}=\mathcal{R}_{\mathbf{w}_t}(\boldsymbol{\xi}_t) $};
    
    \end{axis}

    \end{tikzpicture}
    \caption{Illustration of Riemannian optimization on $\mathbb{T}_2$ (represented as a torus embedded in $\mathbb{R}^3$): the iterate $\mathbf{w}_{t+1}$ is obtained from the retraction $\mathcal{R}_{\mathbf{w}_t}$ applied to the direction descent $\boldsymbol{\xi}_t \in {T}_{\mathbf{w}_t} \mathbb{T}_2$ (i.e., a vector of the tangent space of $\mathbb{T}_2$ at point $\mathbf{w}_t$).}
    \label{fig:riem_opt_torus}
\end{figure}


\section{Links with existing algorithms}

\subsection{Existing algorithms as COFI-PL instances}

\label{sec:sota}

An overview relating state-of-the-art IPL algorithms to module options of COFI-PL is presented in Table \ref{tab:overview}.
We can notice the prevalence of KL-IPL formulations due to their interesting Gaussian maximum likelihood roots.
In this scope, a subtlety requires some discussion:
as it is formalized in this paper, COFI-PL aims at fitting the phases of the plug-in (regularized) estimator $\tilde{\mathbf{\Sigma}}$ to its own modulus $\tilde{\mathbf{\Psi}} \overset{\Delta}{=} {\rm mod}(\tilde{\mathbf{\Sigma}})$.
For KL-IPL, this means minimizing the quadratic form on $\mathbb{T}_p$ in \eqref{eq:generif_torus_qf}, with the matrix  $\mathbf{M}_{\rm KL}= \tilde{\mathbf{\Psi}}^{-1} \circ \tilde{\mathbf{\Sigma}} $.
On the other hand, traditional formulations of KL-IPL arise from a Gaussian maximum likelihood approach.
In this case, $\tilde{{\mathbf{\Sigma}}} = \mathbf{S}$ is set by the construction of the likelihood function, and the regularization is only applied to the plug-in estimate of the unknown modulus $\tilde{\mathbf{\Psi}}$ (before inversion).
Such construction of the objective function is denoted as KL$_{\rm ML}$ in the table to highlight the distinction.
Though not studied in this paper for the sake of clarity, this difference suggests a more flexible implementation, in which one can dissociate (mix-and-match) regularization processes applied to $\tilde{\mathbf{\Psi}}^{-1} $ and $\tilde{\mathbf{\Sigma}}$ when constructing $\mathbf{M}_{\rm KL}$.

\begin{table*}[!t]
    \centering
    \begin{tabular}{|c|c|c|c|c|c|}
        \hline
        Method &  Plug-in $\hat{\mathbf{\Sigma}}$ & Regularization & Fitting cost & Optimization & Remark \\ \hline
        
        PL \cite{Guarnieri2008} & SCM & & KL & Elementwise on $\{ \theta_q \}_{q=1}^p$ & \\

        PTA \cite{Ferretti2011_squee} & SCM & & KL & BFGS on $\boldsymbol{\theta}$ & \\

        CAESAR \cite{Fornaro2015}  & SCorr  & LR \eqref{eq:low_rank_approx}, $r=1$ & KL$_{\rm ML}$ &   &  \\ 

        Cao et al. \cite{Cao2015} & SCorr  & LR (optional) & KL &  & LR uses KL$_{\rm ML}$ as in \cite{Fornaro2015}\\ 

        Bootstrapping \cite{jiang2020distributed} & Boot. Corr &  & KL$_{\rm ML}$ &  & Bootstrapping method for coherence estimation\\
        
        EMI  \cite{ansari2018efficient} & SCM &  & KL & EVD relax. & \\

        
        TMLE \cite{wang2022new} & SCM  & shrink. & \eqref{eq:mindetTMLE} & & \\ 

        Zwieback \cite{zwieback2022cheap} & SCM & shrink. and/or Tap.   & KL$_{\rm ML}$  &  & \\

        LS-PL \cite{Vu2023igarss} & SCM & & LS &  MM on $\mathbf{w}_{\boldsymbol{\theta}}$ & \\

        LaMIE \cite{bai2023lamie} & SCM & shrink. and Tap.  & LS & BFGS on $\boldsymbol{\theta}$ or EVD relax. & Parameter selection with \cite{ollila2022regularized}\\\hline
    \end{tabular}
    \vspace{0.2cm} 
    \caption{State-of-the-art IPL algorithms as instances of COFI-PL}
    \label{tab:overview}
\end{table*}

\subsection{Other maximum likelihood approaches}

Section \ref{sec:estimreguljoint} discussed the use of structure constrained covariance matrix estimators.
It naturally raises the following questions: is it possible to directly perform structured maximum likelihood estimation with the structure constraint in \eqref{eq:cov_struct_1}?
Unfortunately, the modulus-argument decomposition is not an holomorphic function \cite{rudin2006real,appel2007mathematics}, so it is not well suited to a differentiation-based optimization process.
To overcome this issue, several works \cite{Vu2022igarss, wang2022new, Vu2023robust} considered the alternative decomposition $\mathbf{\Sigma} = \boldsymbol{\Psi}_{\mathcal{R}}  \circ (\mathbf{w}_{\boldsymbol{\theta}} \mathbf{w}_{\boldsymbol{\theta}}^H)$, where $\boldsymbol{\Psi}_{\mathcal{R}} \in \mathcal{S}_p^{++}$ is a real positive definite matrix.
This decomposition involves parameters in smooth manifolds, so it can be efficiently accounted for in a constrained optimization problem.
It also coincides with the modulus-argument when all entries in $\boldsymbol{\Psi}_{\mathcal{R}}$ are positive.
However, it is less restrictive, and the resulting solutions can potentially include negative entries in $\boldsymbol{\Psi}_{\mathcal{R}}$, i.e., some ambiguities that break the phase closure property (possible correction procedures are discussed in \cite{Vu2023robust}).
In the Gaussian case, \cite{wang2022new} relies on a relaxing simplification that allows to compress the likelihood in the objective function into a single determinant:
\begin{equation} \label{eq:mindetTMLE}
    \begin{array}{c l} 
\underset{{\mathbf{w}_{\boldsymbol{\theta}}}\in\mathbb{T}_p}{\rm minimize}
& 
\left| \mathfrak{Re} \{ {\rm diag}(\mathbf{w}_{\boldsymbol{\theta}})^H \tilde{\mathbf{S}} {\rm diag}(\mathbf{w}_{\boldsymbol{\theta}}) \}  \right|
\end{array}
\end{equation}
Alternatively, majorization-minimization algorithms for the joint optimization of $\boldsymbol{\Psi}_{\mathcal{R}} $ and $ \mathbf{w}_{\boldsymbol{\theta}}$ are derived in \cite{Vu2022igarss} for the Gaussian case, and extended to rank-constrained and/or scaled Gaussian models in \cite{Vu2023robust}.
In any case, the term ``true maximum likelihood'' needs to be handled with caution because \cite{wang2022new} appears to employ a relaxation of the Gaussian log-likelihood. Moreover, all works \cite{Vu2022igarss, wang2022new, Vu2023robust} rely on a matrix decomposition that is more permissive than the expected structure \eqref{eq:cov_struct_1} (though evidenced to be useful in practice). 
\section{Real case study: subsidence of Mexico city}
\label{sec:simu}

\begin{figure*}[htb]
    \begin{minipage}{0.32\textwidth}
        \centering
        \includegraphics[width=1.0\columnwidth]{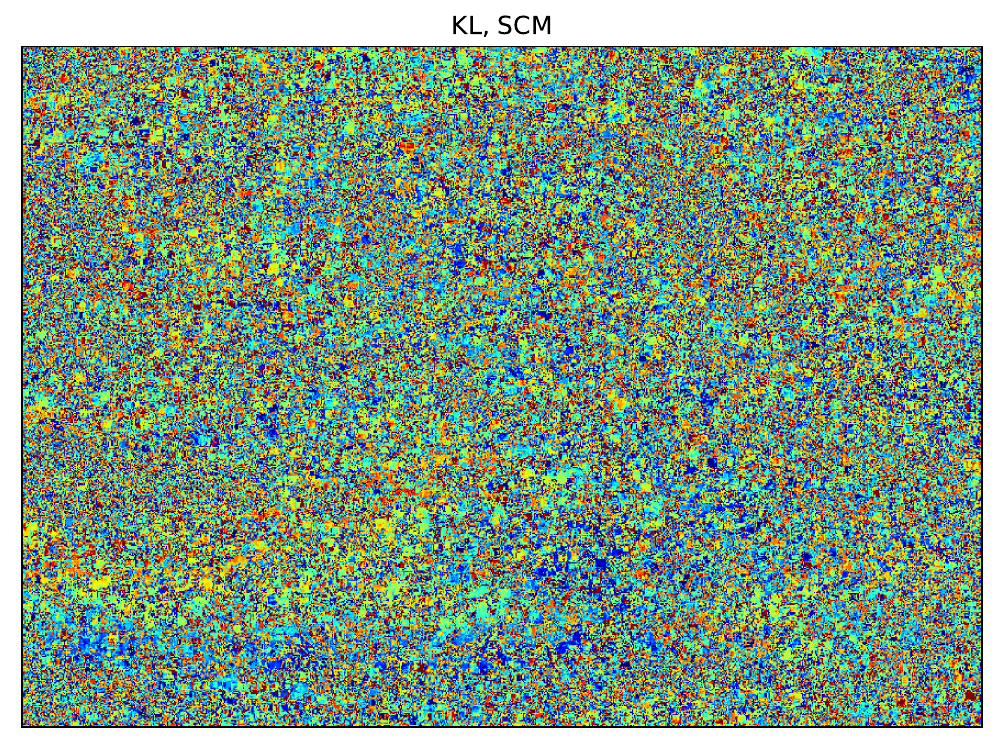}
    \end{minipage}
    \hfill{}
    \begin{minipage}{0.32\textwidth}
        \centering
        \includegraphics[width=1.0\columnwidth]{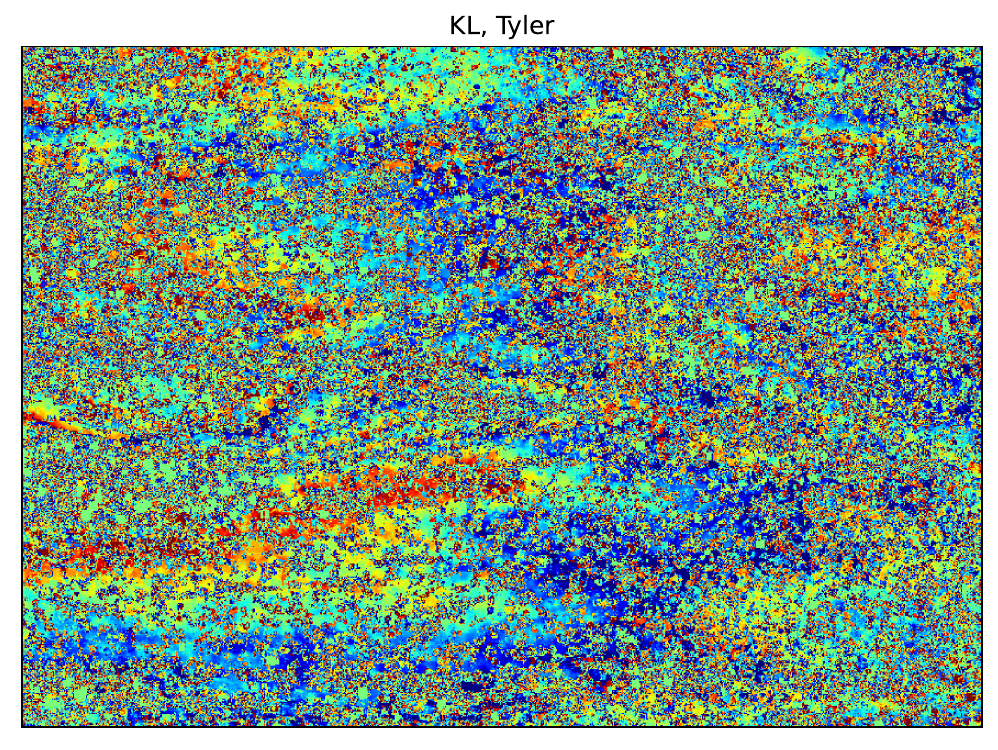}
    \end{minipage}
    \hfill{}
    \begin{minipage}{0.32\textwidth}
        \centering
        \includegraphics[width=1.0\columnwidth]{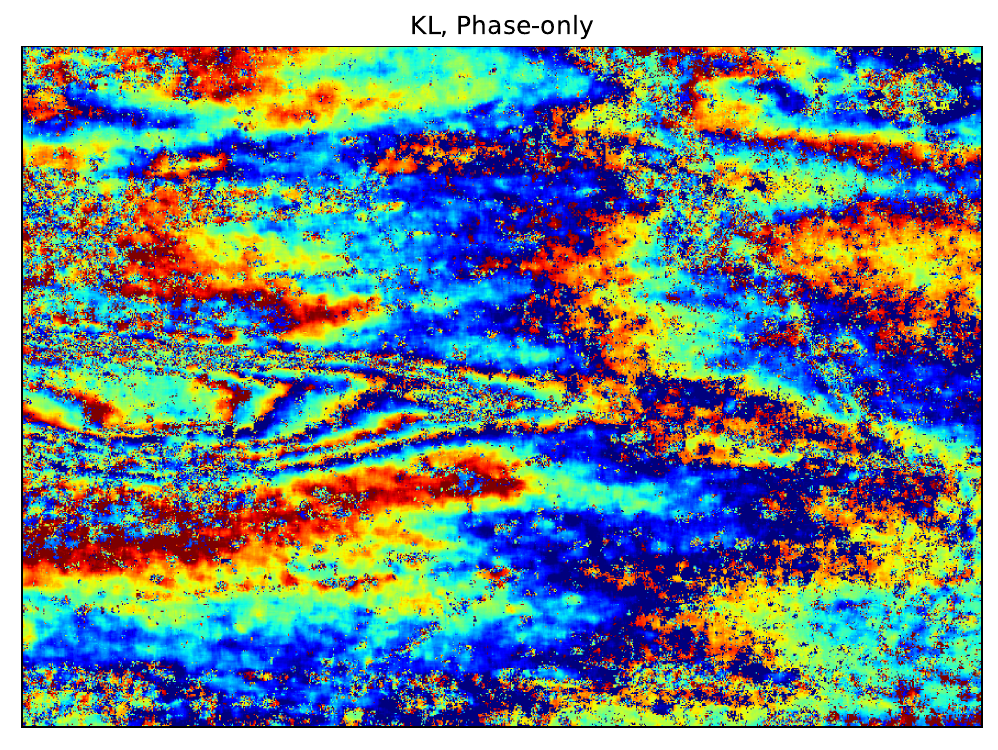}
    \end{minipage}
    \begin{minipage}{0.32\textwidth}
        \centering
        \includegraphics[width=1.0\columnwidth]{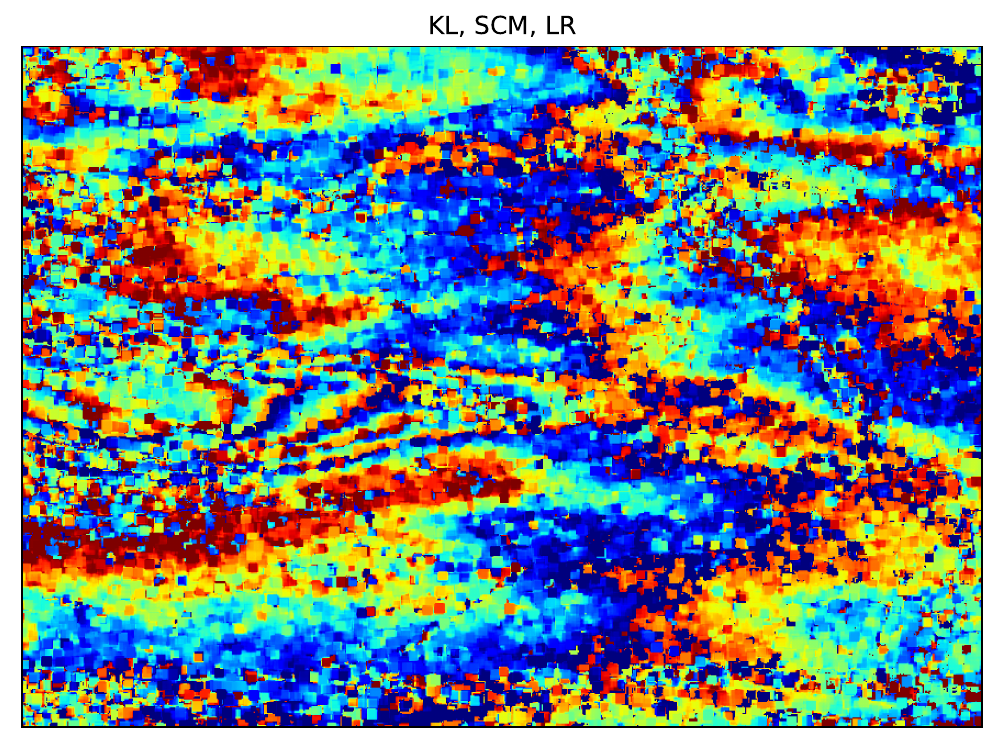}
    \end{minipage}
    \hfill{}
    \begin{minipage}{0.32\textwidth}
        \centering
        \includegraphics[width=1.0\columnwidth]{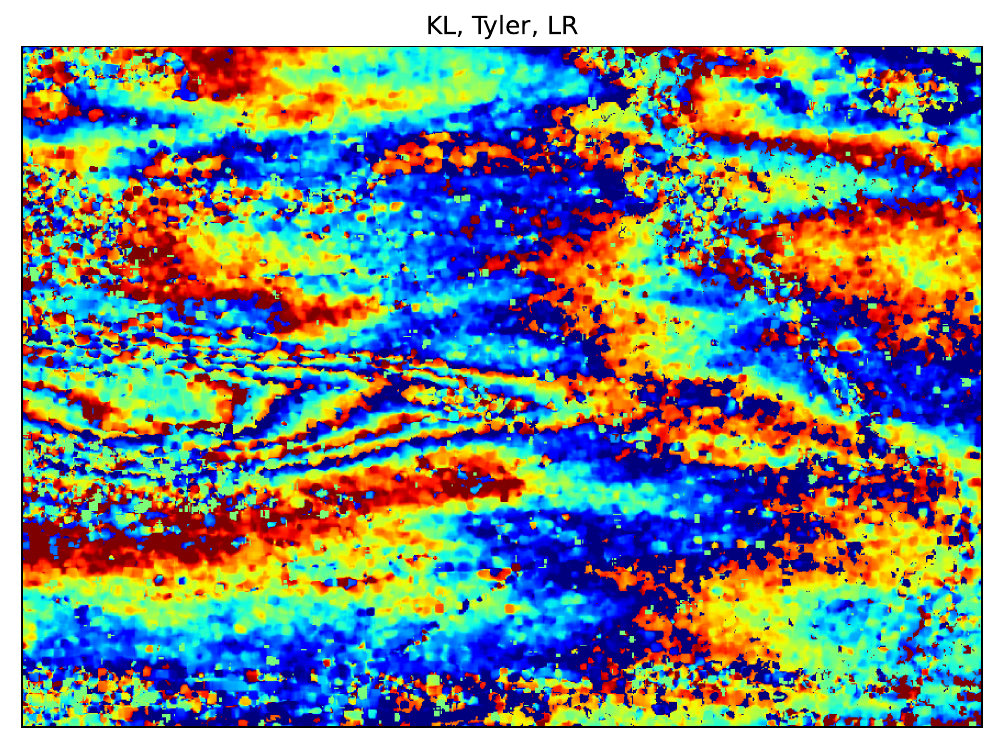}
    \end{minipage}
    \hfill{}
    \begin{minipage}{0.32\textwidth}
        \centering
        \includegraphics[width=1.0\columnwidth]{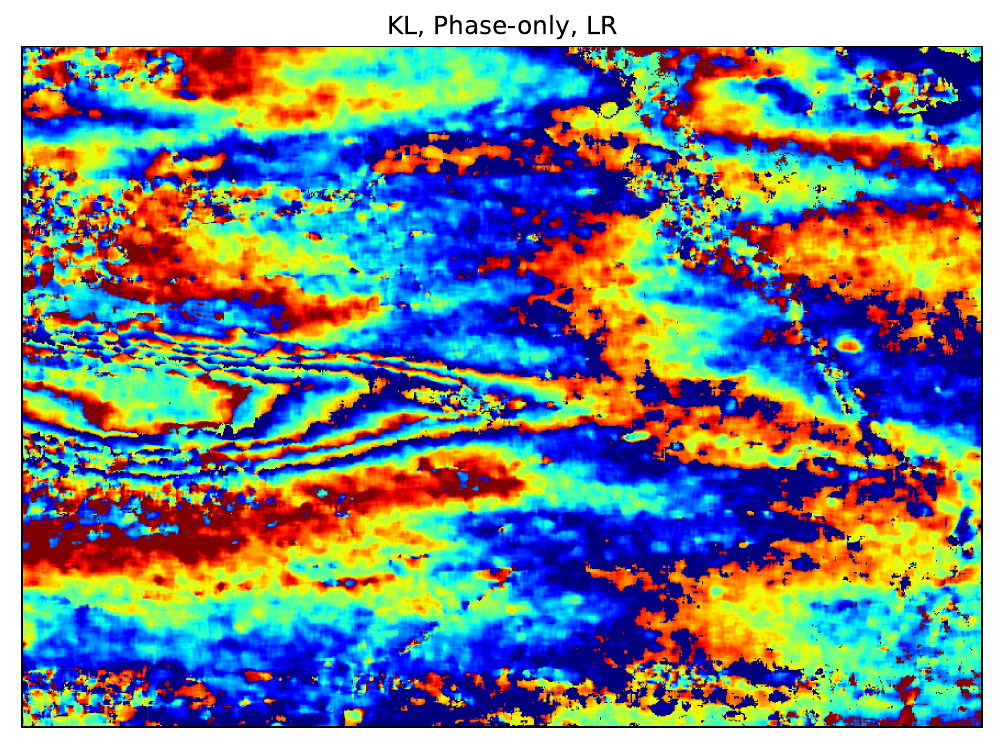}
    \end{minipage}
    \begin{minipage}{0.32\textwidth}
        \centering
        \includegraphics[width=1.0\columnwidth]{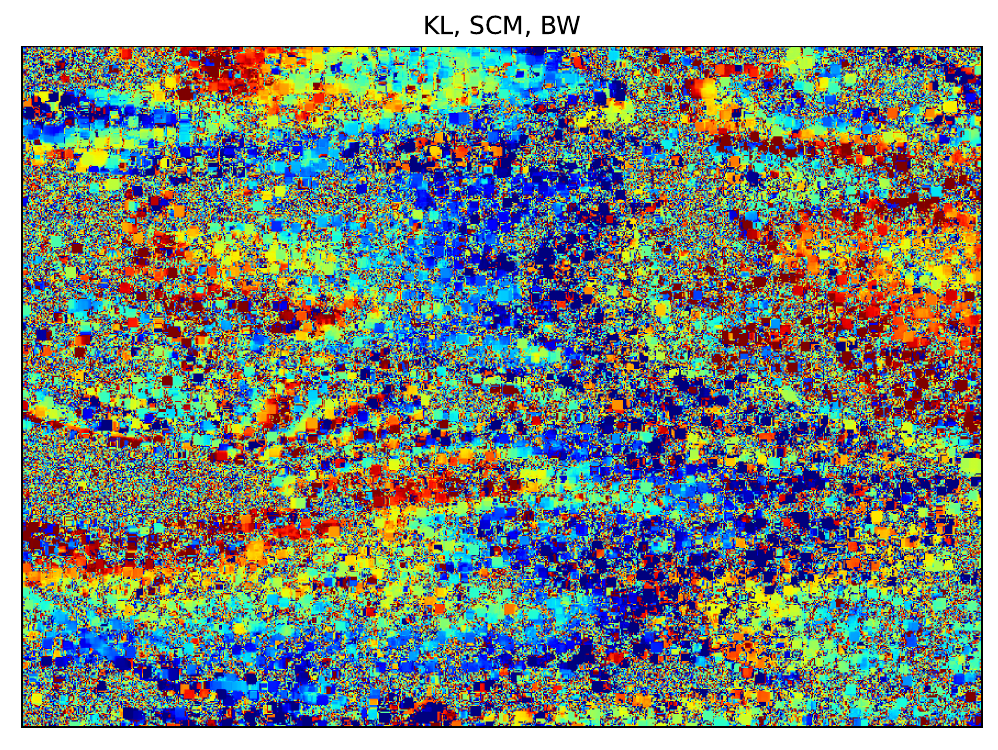}
    \end{minipage}
    \hfill{}
    \begin{minipage}{0.32\textwidth}
        \centering
        \includegraphics[width=1.0\columnwidth]{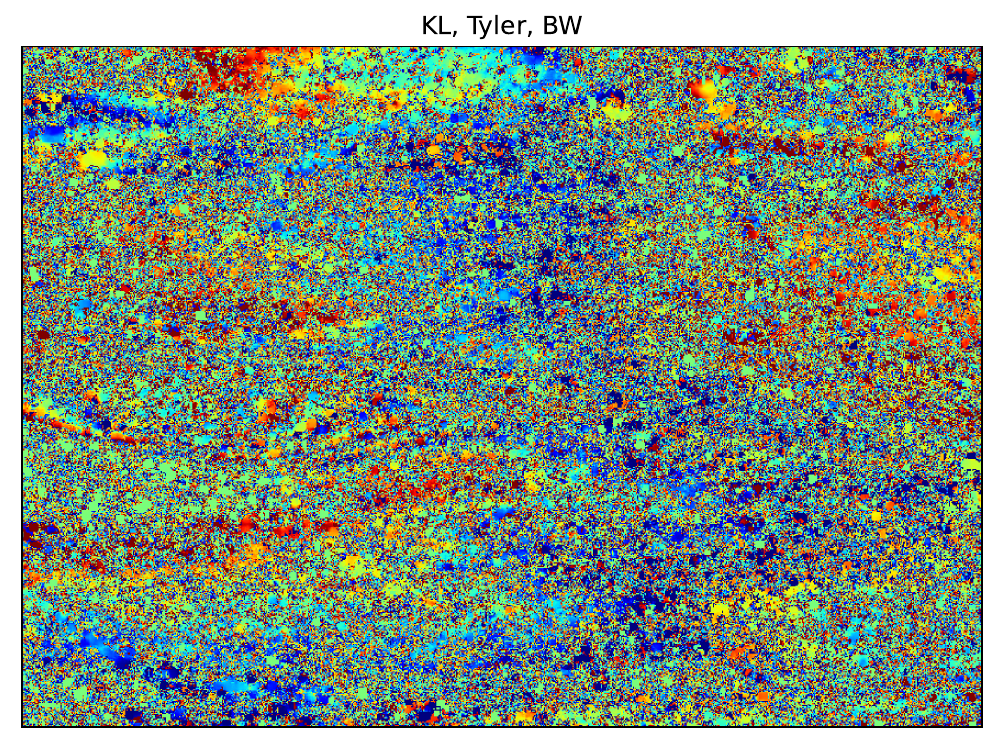}
    \end{minipage}
    \hfill{}
    \begin{minipage}{0.32\textwidth}
        \centering
        \includegraphics[width=1.0\columnwidth]{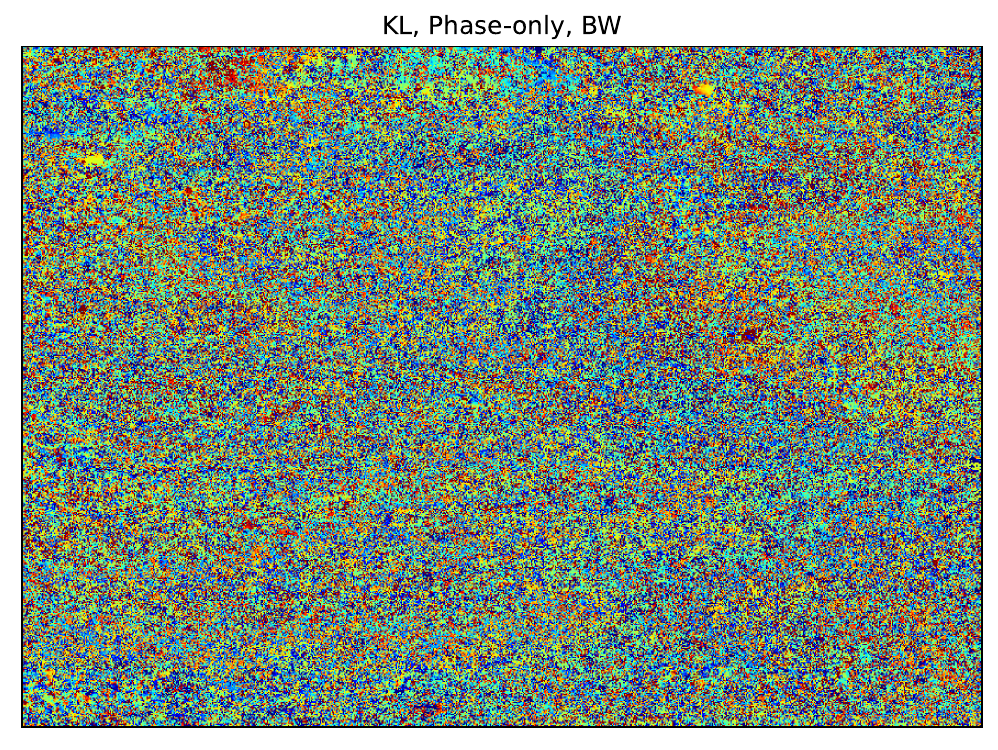}
    \end{minipage}
    \begin{minipage}{0.32\textwidth}
        \centering
        \includegraphics[width=1.0\columnwidth]{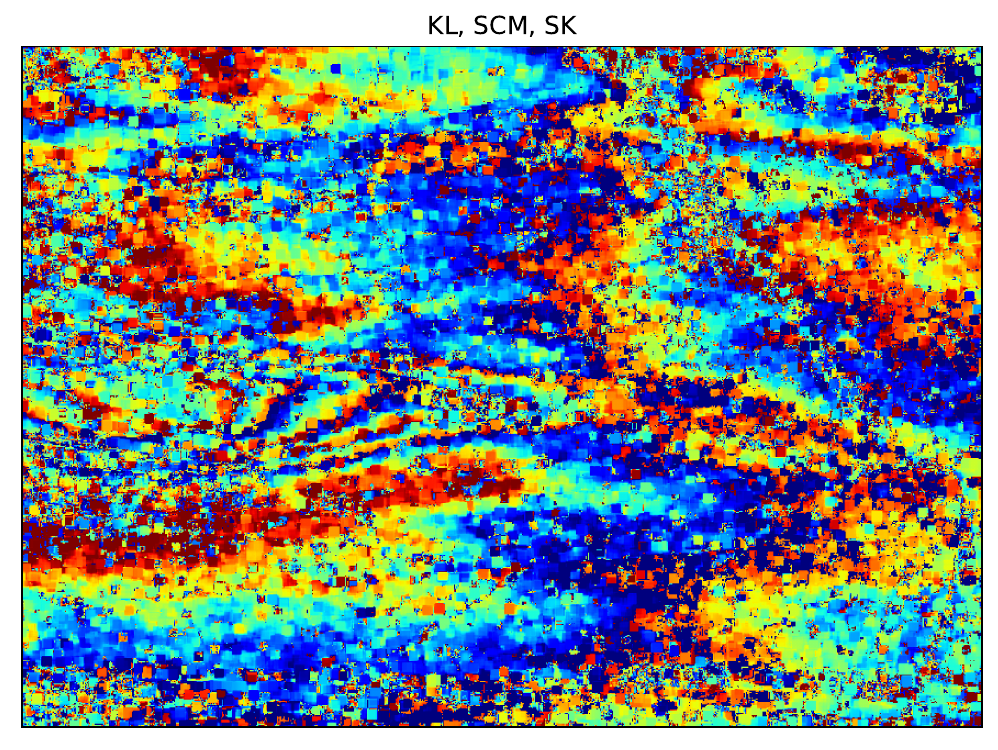}
    \end{minipage}
    \hfill{}
    \begin{minipage}{0.32\textwidth}
        \centering
        \includegraphics[width=1.0\columnwidth]{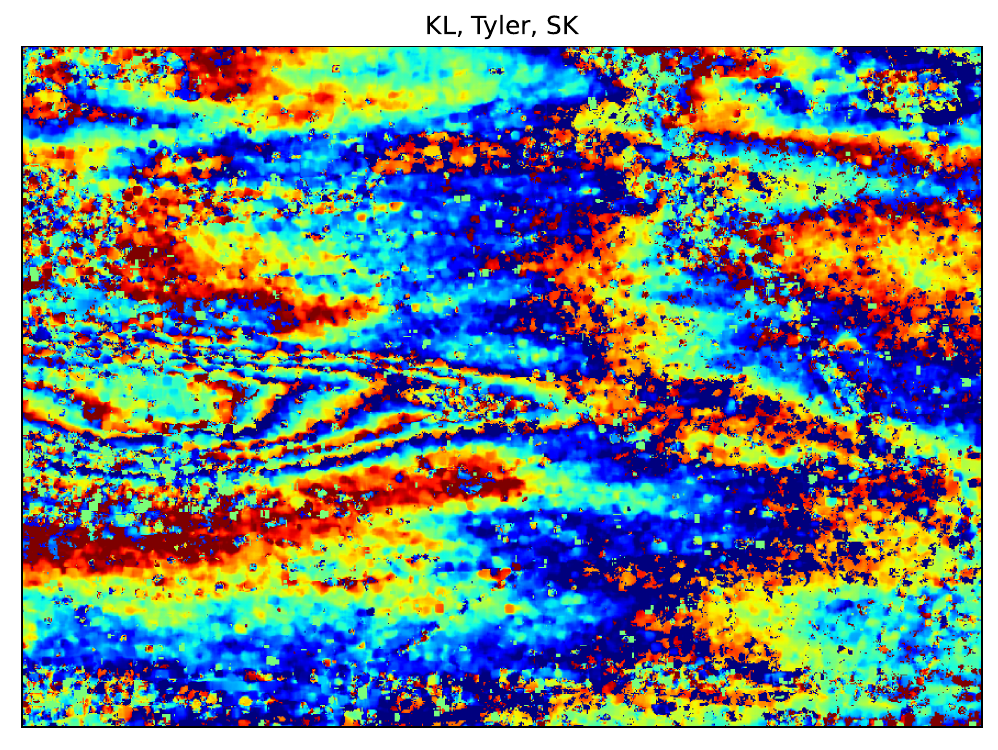}
    \end{minipage}
    \hfill{}
    \begin{minipage}{0.32\textwidth}
        \centering
        \includegraphics[width=1.0\columnwidth]{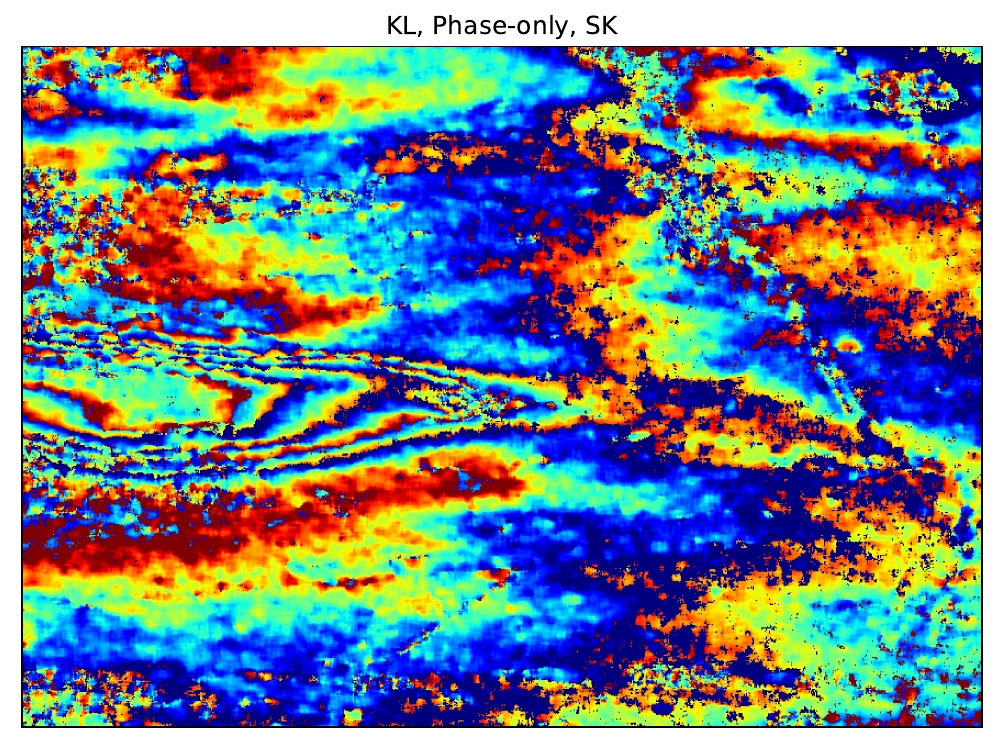}
    \end{minipage}   

    \begin{minipage}{1.0\linewidth}
    \centering
        \includegraphics[width=0.6\columnwidth]{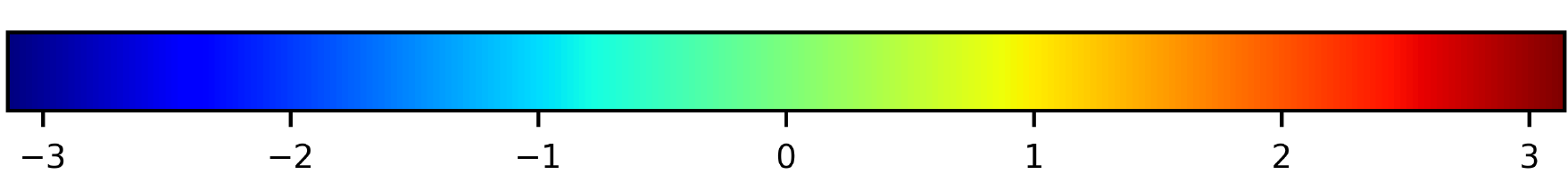}
    \end{minipage}
\caption{COFI-PL with KL fitting applied to Mexico InSAR dataset with various plug-in estimators (from left to right: sample covariance matrix \eqref{eq:SCM}, Tyler's estimator \eqref{eq:mest} with $u_{T}(t) = p/t$, phase-only sample covariance matrix \eqref{eq:phase_only_SCM}) and various regularization processes (from top to bottom: no regularization, low-rank approximation \eqref{eq:low_rank_approx2} with $k=1$, tapering \eqref{eq:tapering} with bandwidth $b=9$, shrinkage to identity \eqref{eq:LWshrink} with $\beta=0.1$).
} \label{fig:KL}
\end{figure*}

\begin{figure*}[htb]
    \begin{minipage}{0.32\textwidth}
        \centering
        \includegraphics[width=1.0\columnwidth]{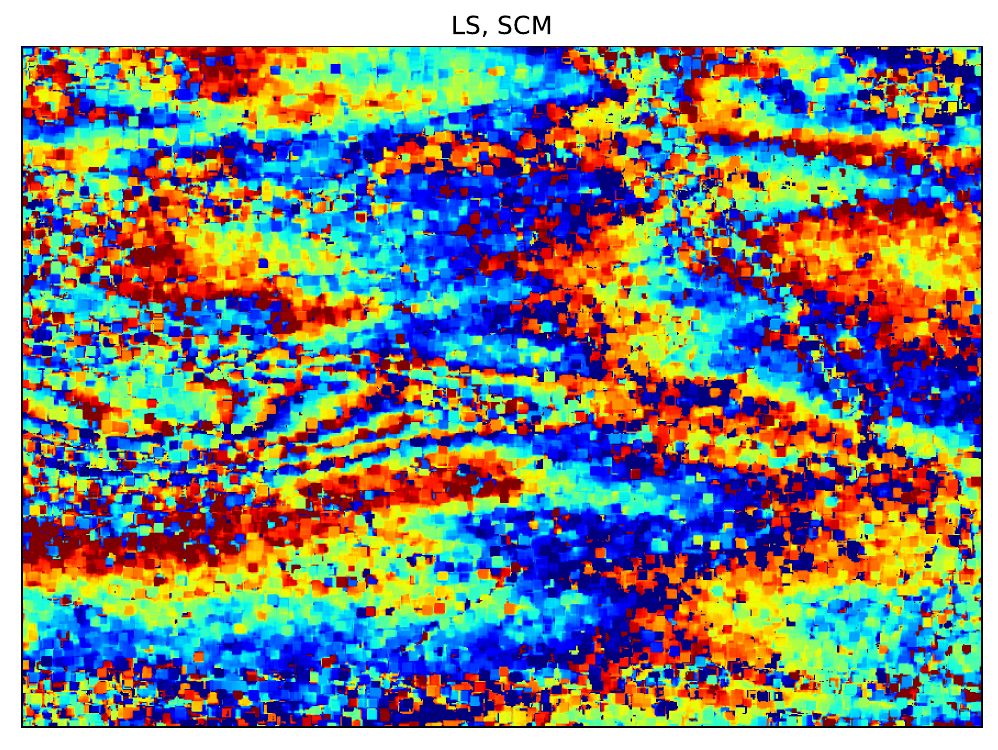}
    \end{minipage}
    \hfill{}
    \begin{minipage}{0.32\textwidth}
        \centering
        \includegraphics[width=1.0\columnwidth]{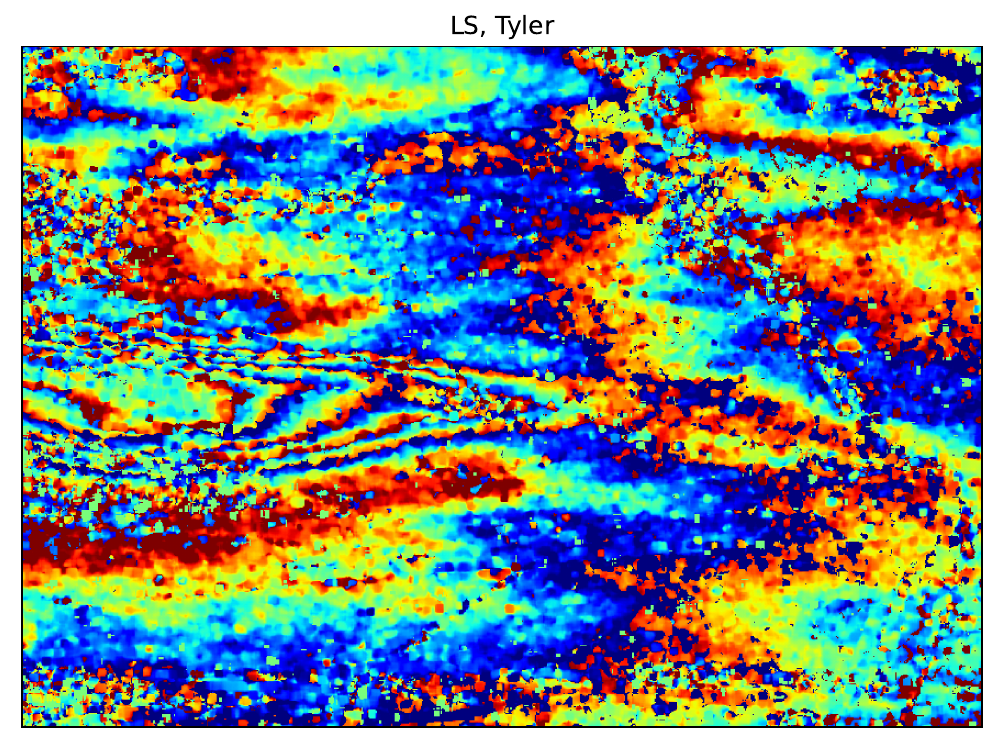}
    \end{minipage}
    \hfill{}
    \begin{minipage}{0.32\textwidth}
        \centering
        \includegraphics[width=1.0\columnwidth]{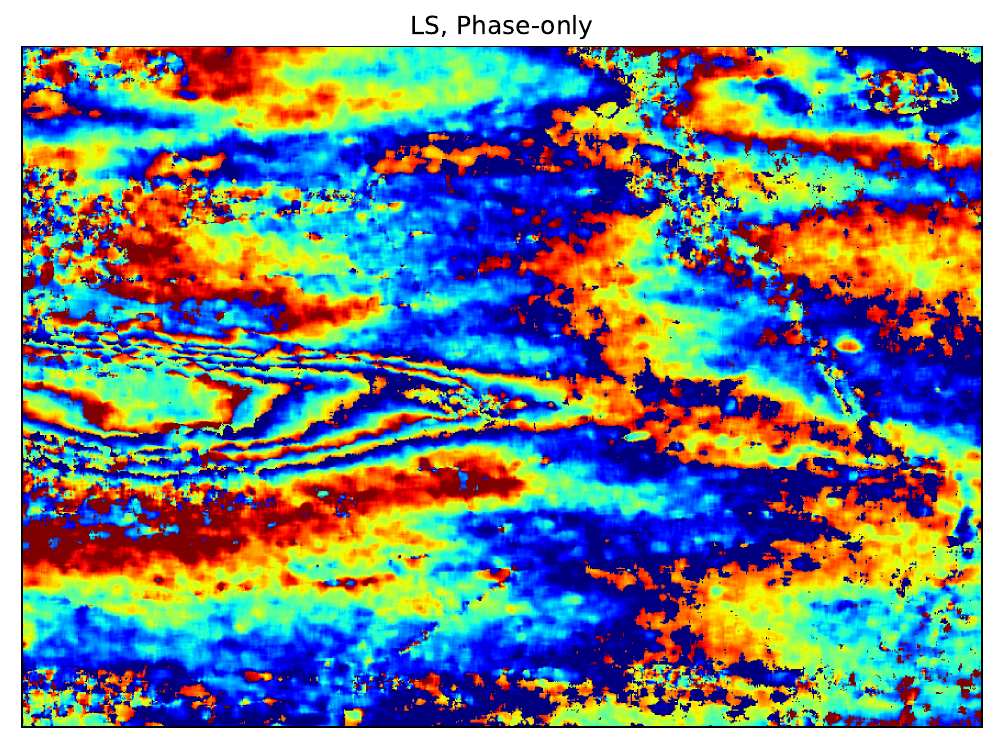}
    \end{minipage}
    \begin{minipage}{0.32\textwidth}
        \centering
        \includegraphics[width=1.0\columnwidth]{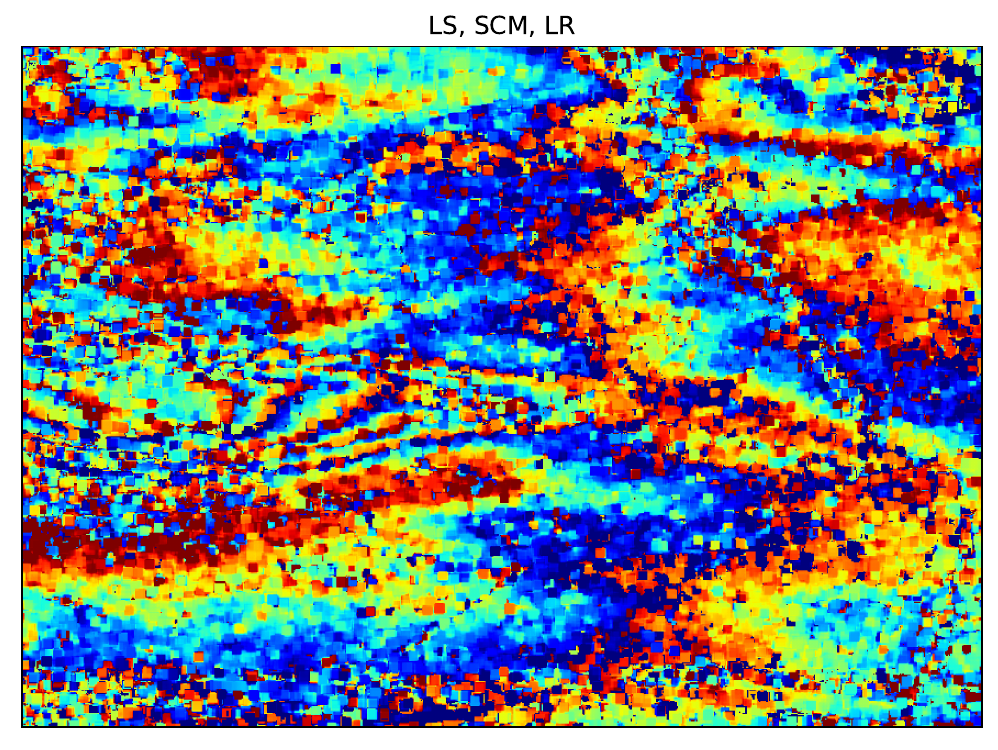}
    \end{minipage}
    \hfill{}
    \begin{minipage}{0.32\textwidth}
        \centering
        \includegraphics[width=1.0\columnwidth]{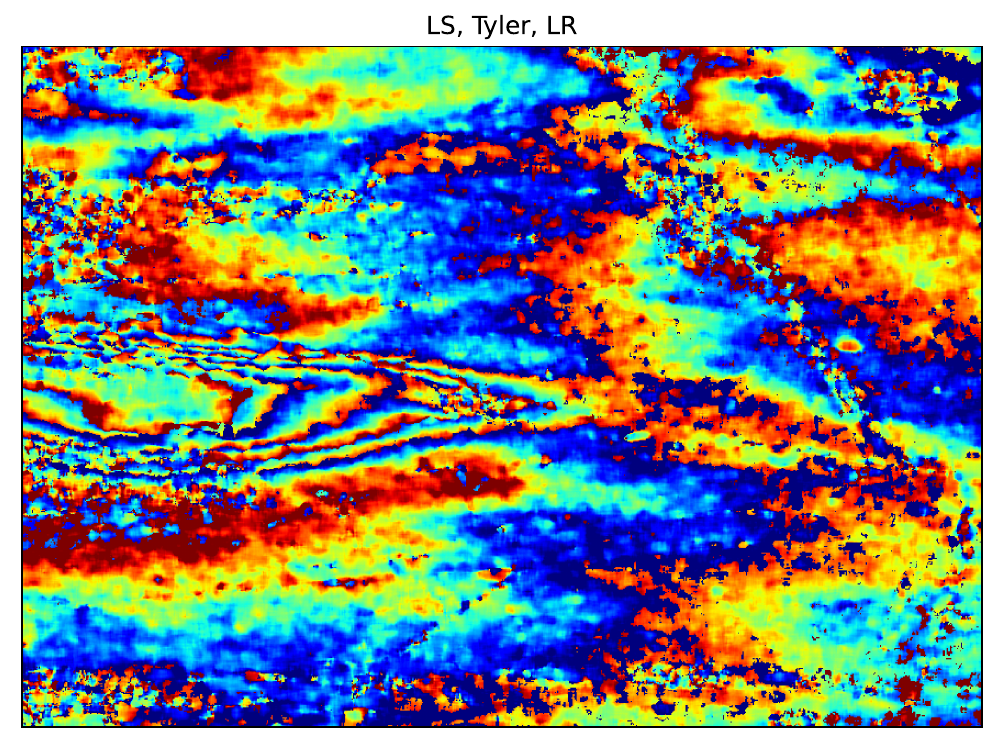}
    \end{minipage}
    \hfill{}
    \begin{minipage}{0.32\textwidth}
        \centering
        \includegraphics[width=1.0\columnwidth]{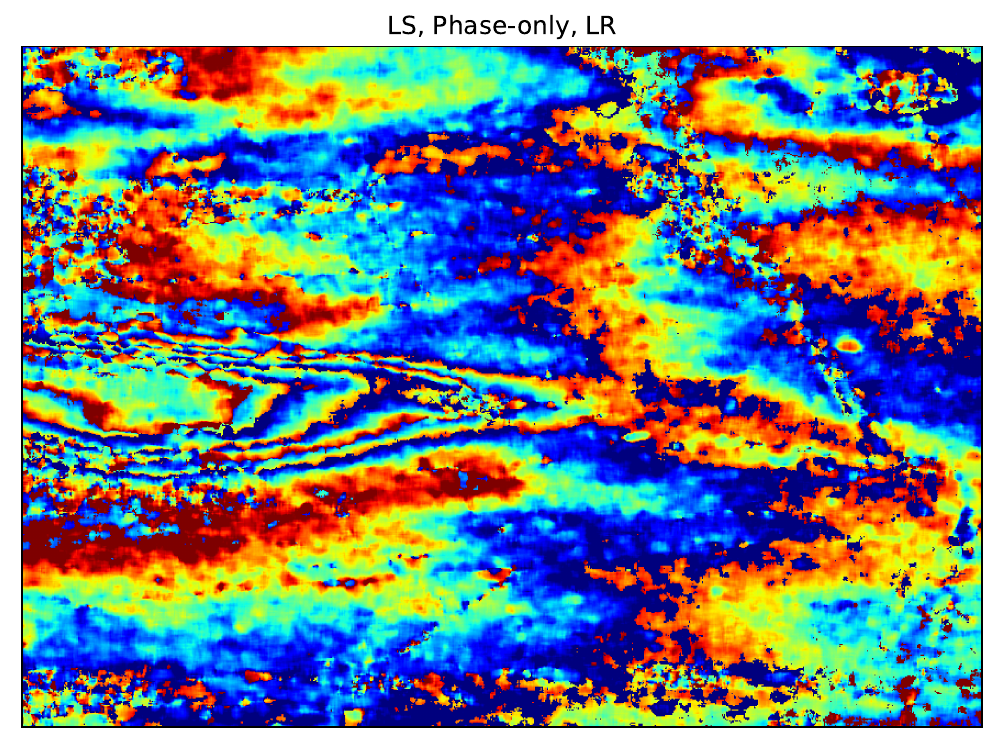}
    \end{minipage}
    \begin{minipage}{0.32\textwidth}
        \centering
        \includegraphics[width=1.0\columnwidth]{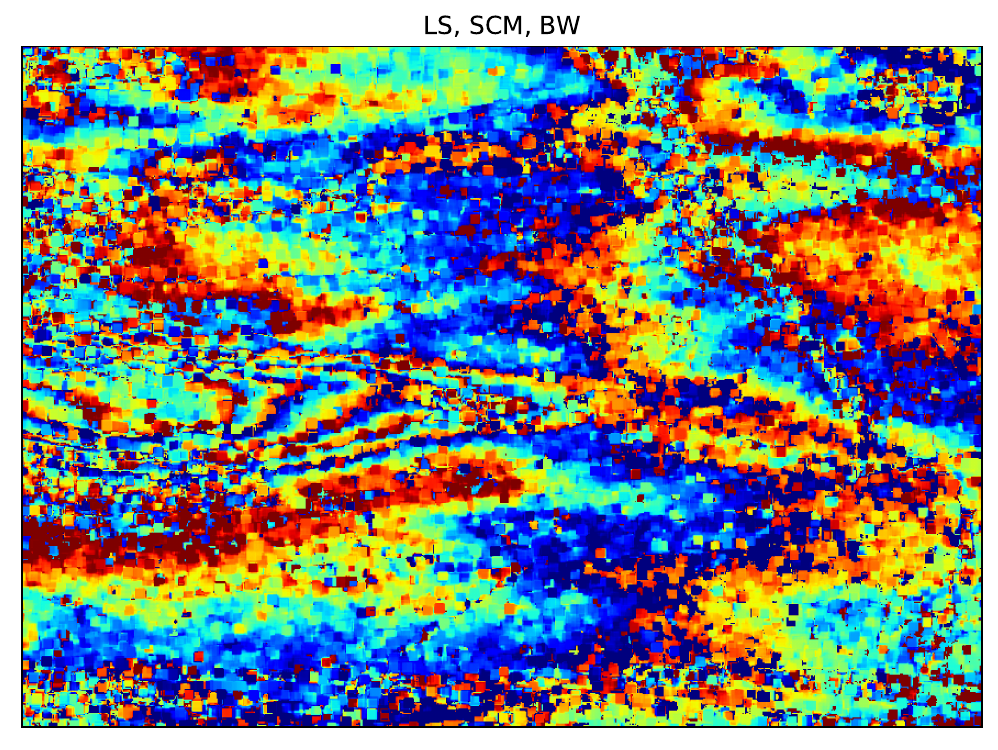}
    \end{minipage}
    \hfill{}
    \begin{minipage}{0.32\textwidth}
        \centering
        \includegraphics[width=1.0\columnwidth]{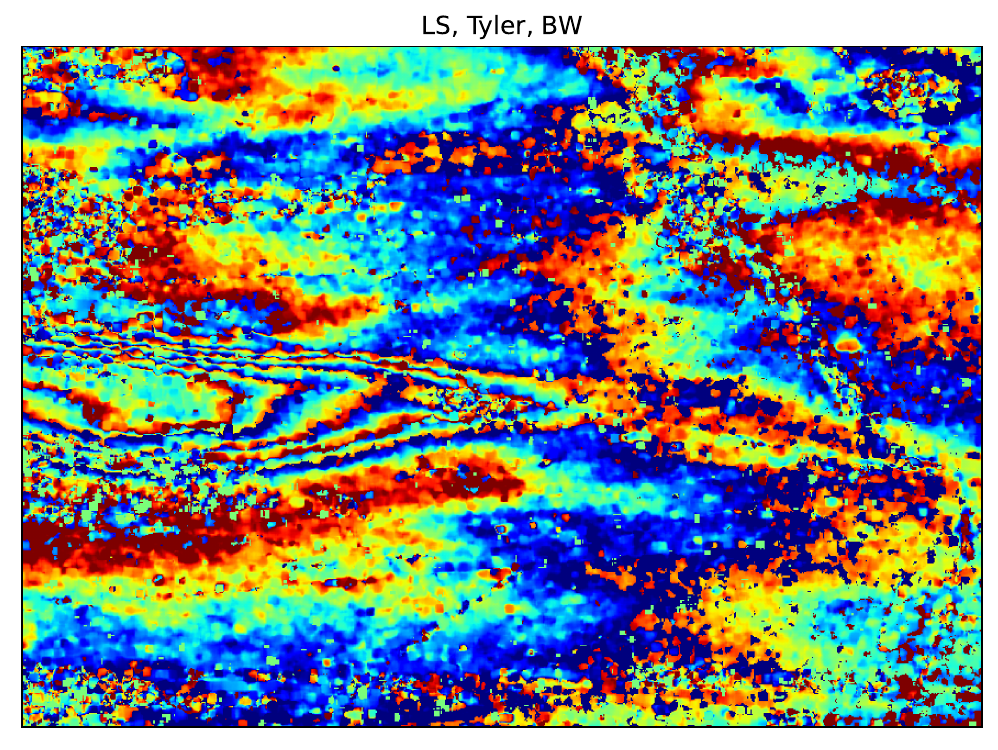}
    \end{minipage}
    \hfill{}
    \begin{minipage}{0.32\textwidth}
        \centering
        \includegraphics[width=1.0\columnwidth]{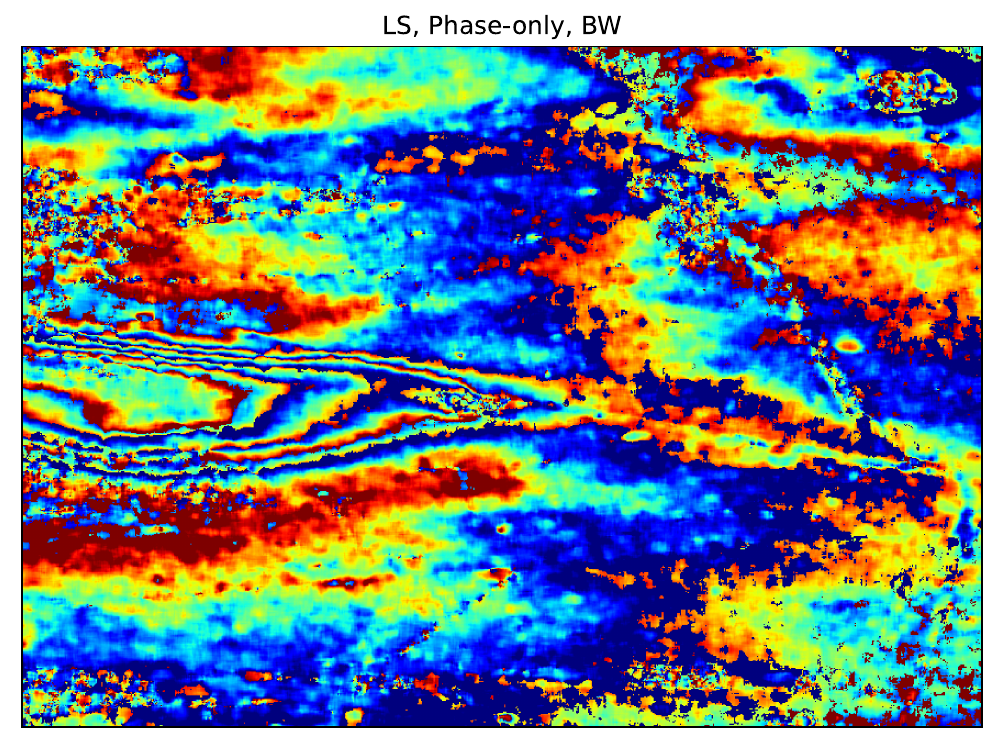}
    \end{minipage}
    \begin{minipage}{0.32\textwidth}
        \centering
        \includegraphics[width=1.0\columnwidth]{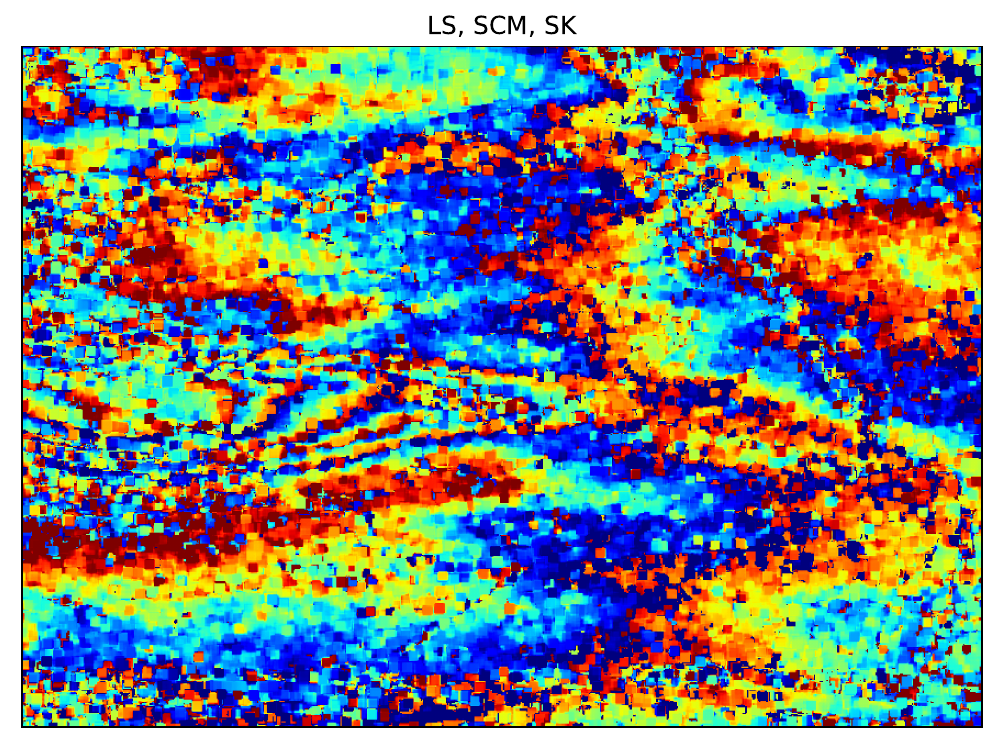}
    \end{minipage}
    \hfill{}
    \begin{minipage}{0.32\textwidth}
        \centering
        \includegraphics[width=1.0\columnwidth]{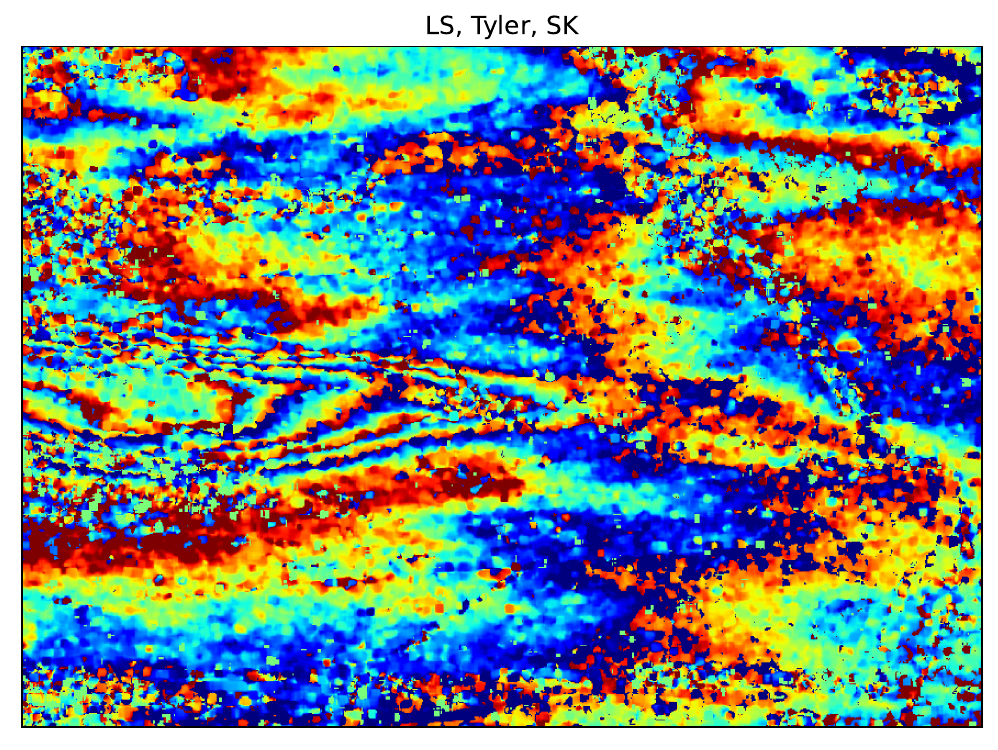}
    \end{minipage}
    \hfill{}
    \begin{minipage}{0.32\textwidth}
        \centering
        \includegraphics[width=1.0\columnwidth]{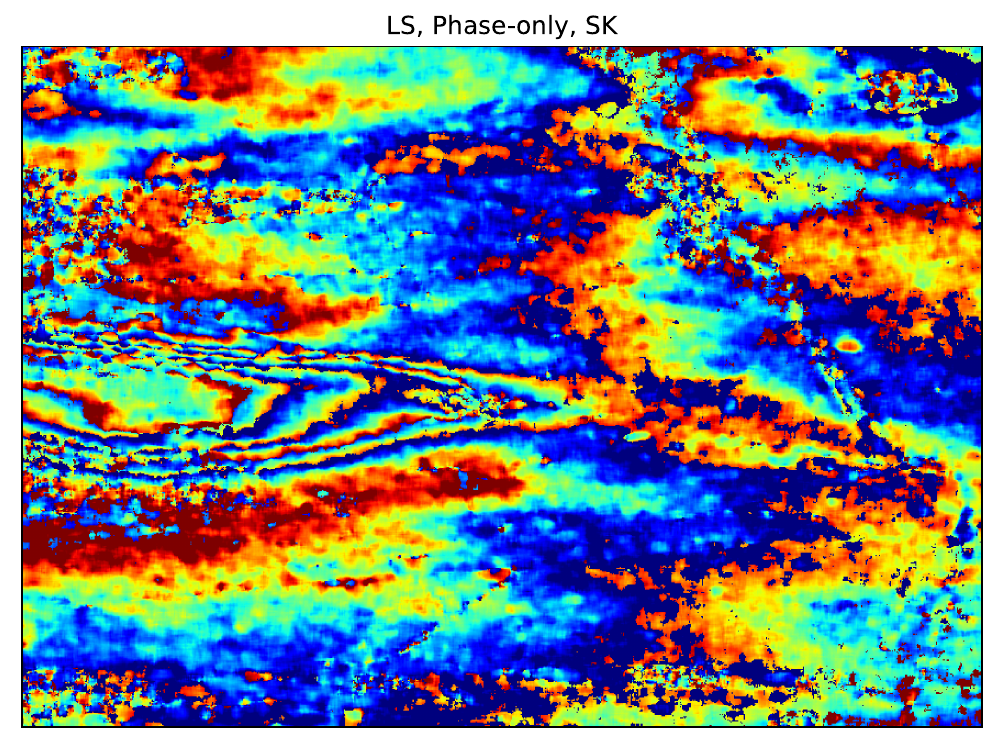}
    \end{minipage}   

    \begin{minipage}{1.0\linewidth}
    \centering
        \includegraphics[width=0.6\columnwidth]{./colorbar.pdf}
    \end{minipage}
\caption{
COFI-PL with LS fitting applied to Mexico InSAR dataset with various plug-in estimators (from left to right: sample covariance matrix \eqref{eq:SCM}, Tyler's estimator \eqref{eq:mest} with $u_{T}(t) = p/t$, phase-only sample covariance matrix \eqref{eq:phase_only_SCM}) and various regularization processes (from top to bottom: no regularization, low-rank approximation \eqref{eq:low_rank_approx2} with $k=1$, tapering \eqref{eq:tapering} with bandwidth $b=9$, shrinkage to identity \eqref{eq:LWshrink} with $\beta=0.1$).
} \label{fig:LS}
\end{figure*}

\begin{figure*}[htb]
    \begin{minipage}{0.32\textwidth}
        \centering
        \includegraphics[width=1.0\columnwidth]{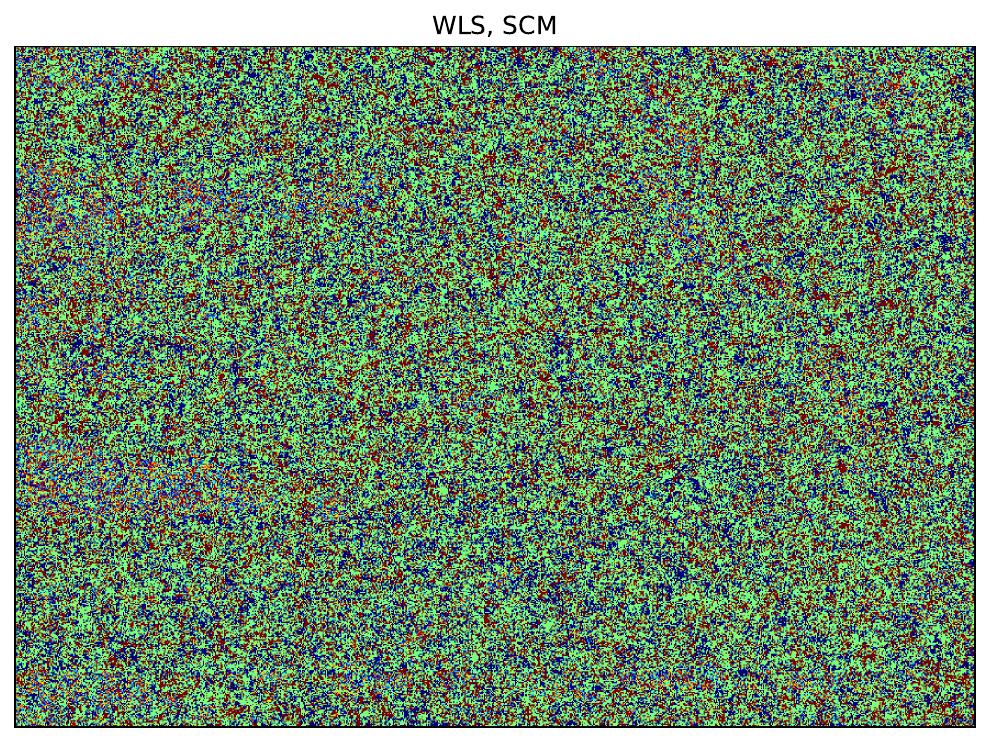}
    \end{minipage}
    \hfill{}
    \begin{minipage}{0.32\textwidth}
        \centering
        \includegraphics[width=1.0\columnwidth]{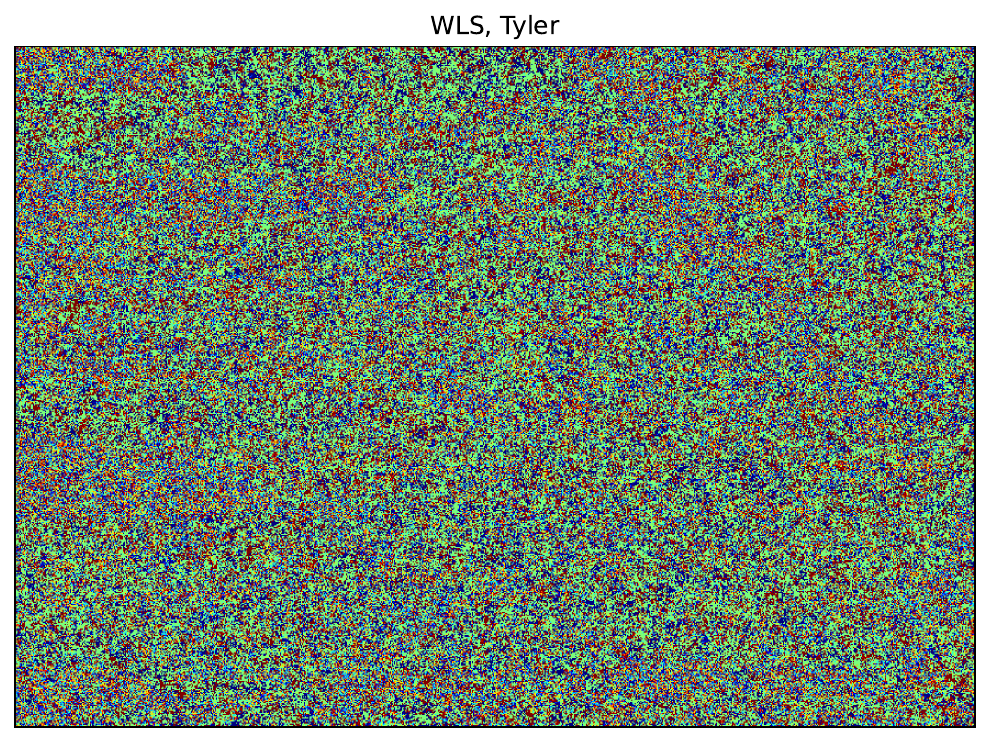}
    \end{minipage}
    \hfill{}
    \begin{minipage}{0.32\textwidth}
        \centering
        \includegraphics[width=1.0\columnwidth]{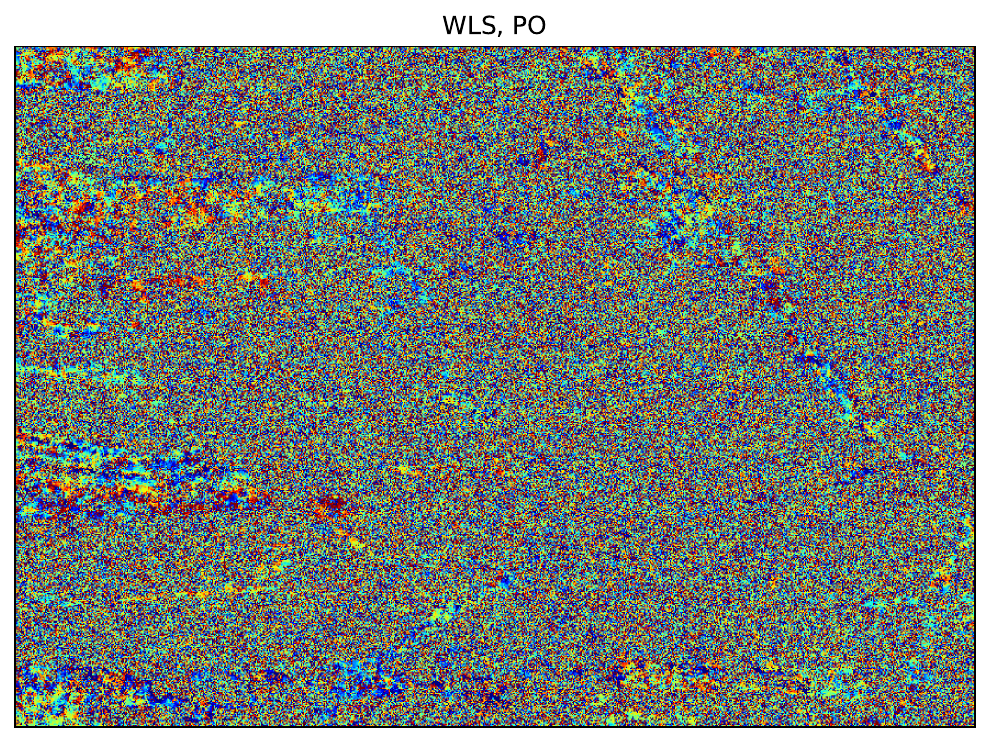}
    \end{minipage}
    \begin{minipage}{0.32\textwidth}
        \centering
        \includegraphics[width=1.0\columnwidth]{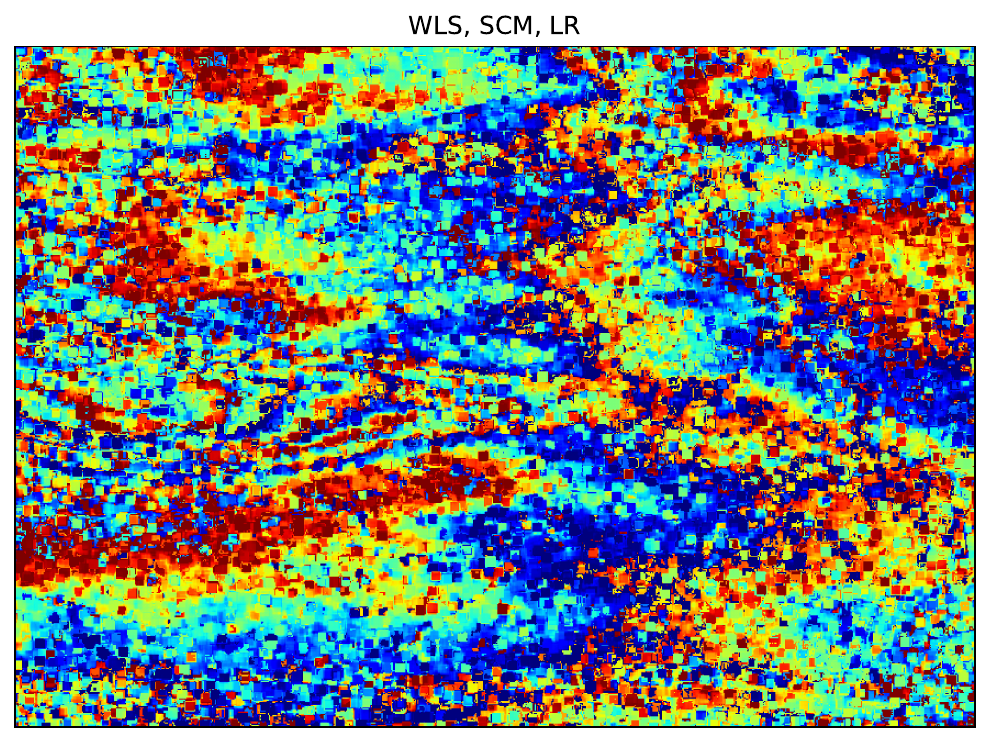}
    \end{minipage}
    \hfill{}
    \begin{minipage}{0.32\textwidth}
        \centering
        \includegraphics[width=1.0\columnwidth]{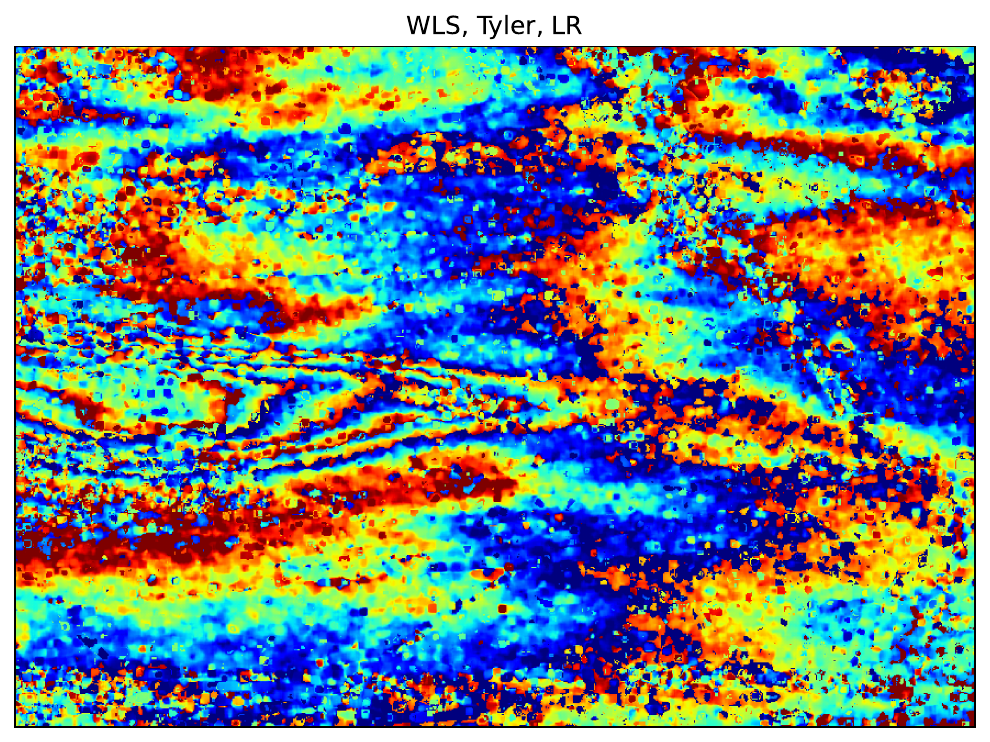}
    \end{minipage}
    \hfill{}
    \begin{minipage}{0.32\textwidth}
        \centering
        \includegraphics[width=1.0\columnwidth]{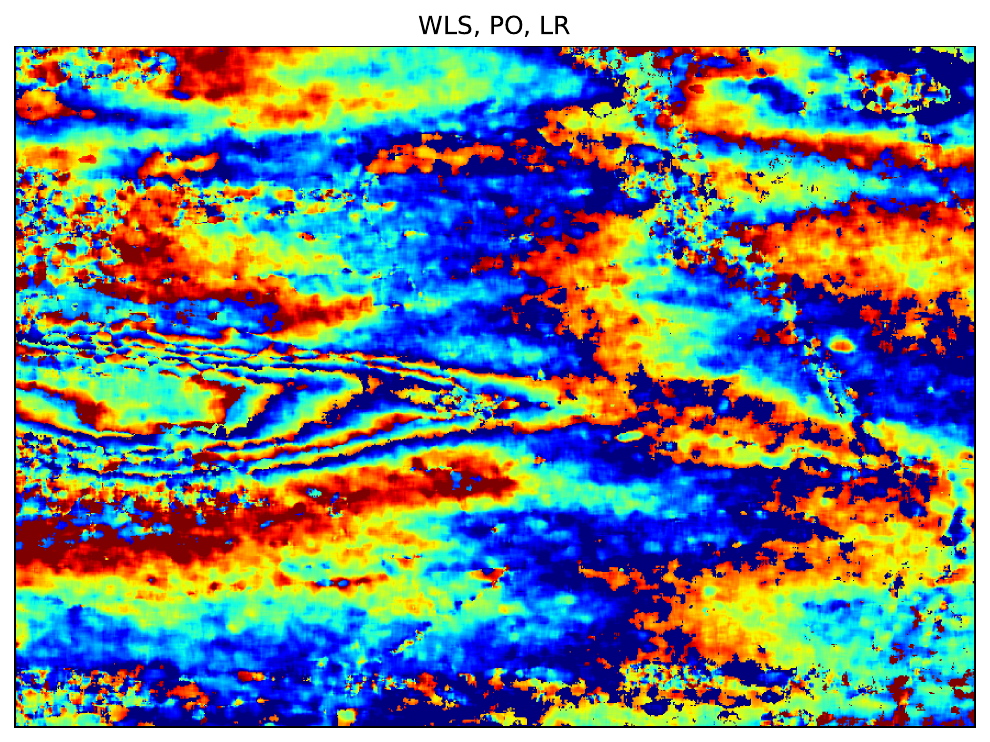}
    \end{minipage}
    \begin{minipage}{0.32\textwidth}
        \centering
        \includegraphics[width=1.0\columnwidth]{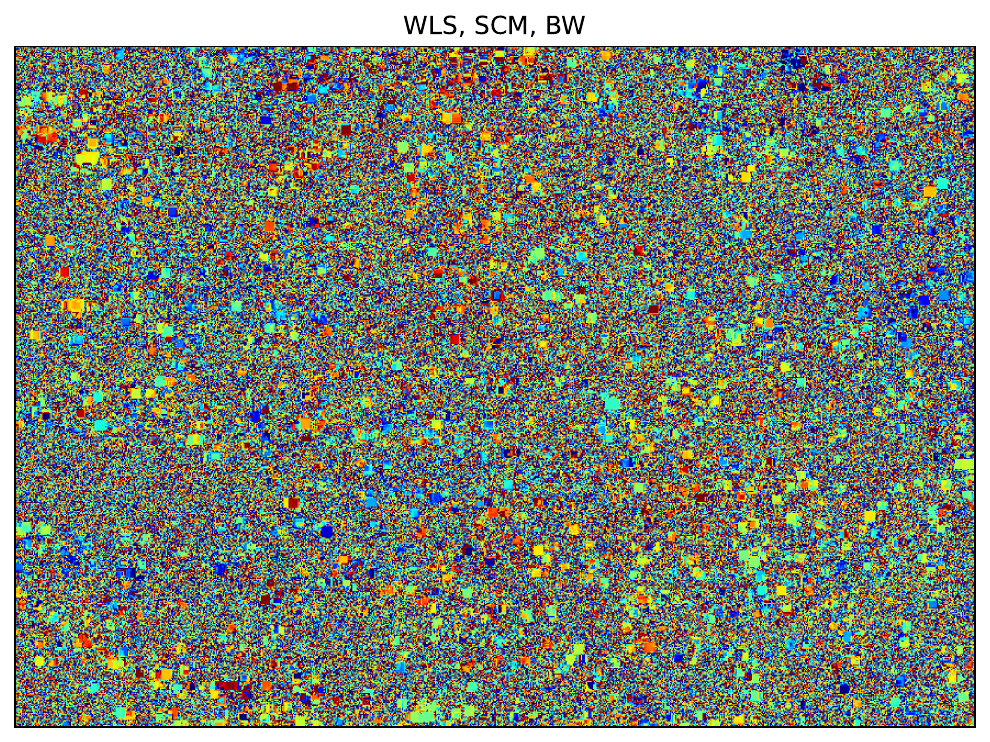}
    \end{minipage}
    \hfill{}
    \begin{minipage}{0.32\textwidth}
        \centering
        \includegraphics[width=1.0\columnwidth]{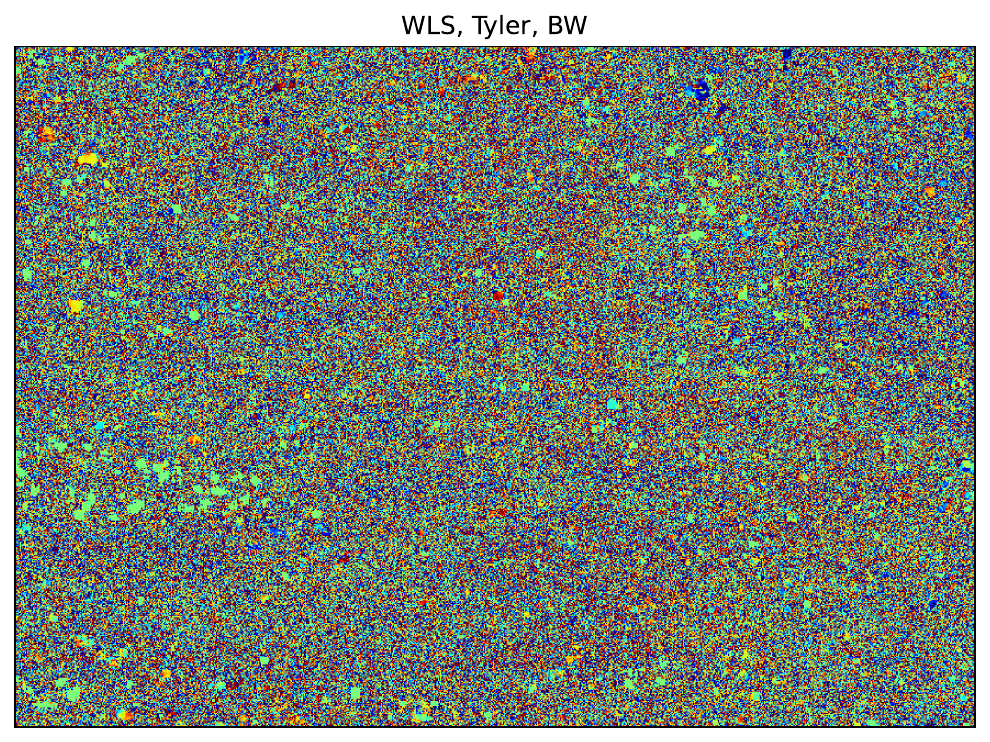}
    \end{minipage}
    \hfill{}
    \begin{minipage}{0.32\textwidth}
        \centering
        \includegraphics[width=1.0\columnwidth]{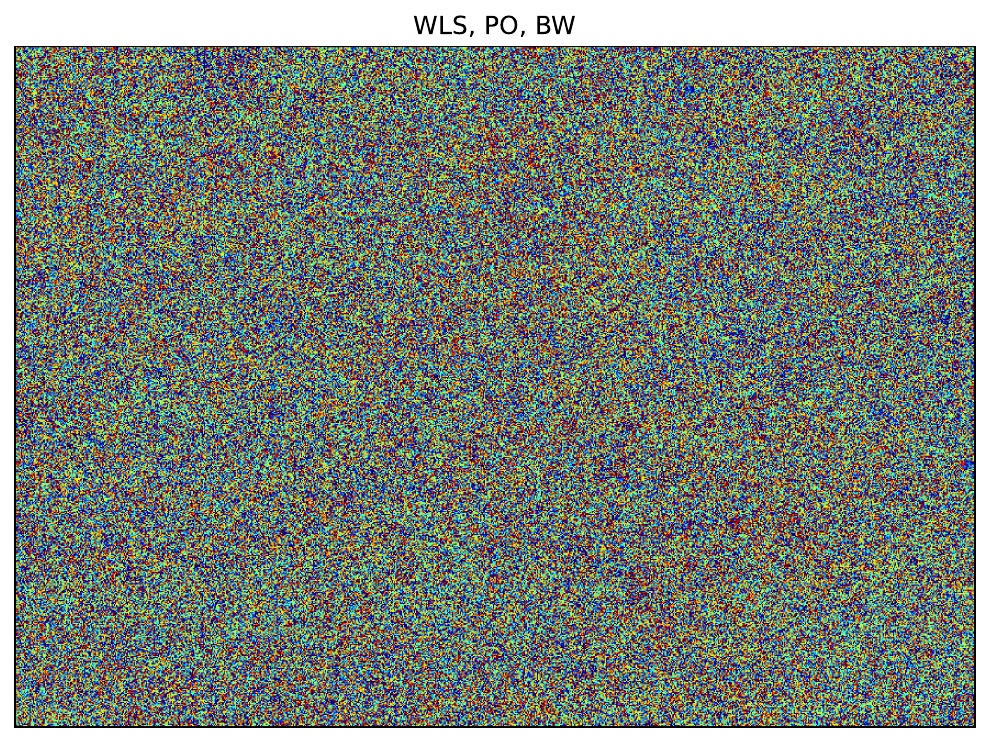}
    \end{minipage}
    \begin{minipage}{0.32\textwidth}
        \centering
        \includegraphics[width=1.0\columnwidth]{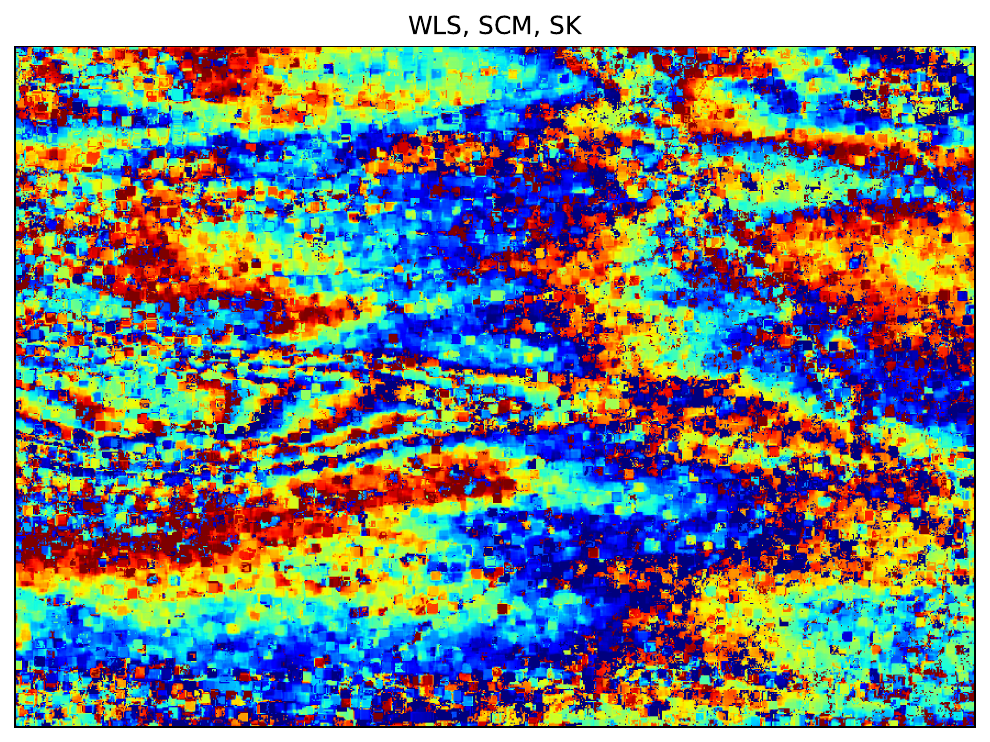}
    \end{minipage}
    \hfill{}
    \begin{minipage}{0.32\textwidth}
        \centering
        \includegraphics[width=1.0\columnwidth]{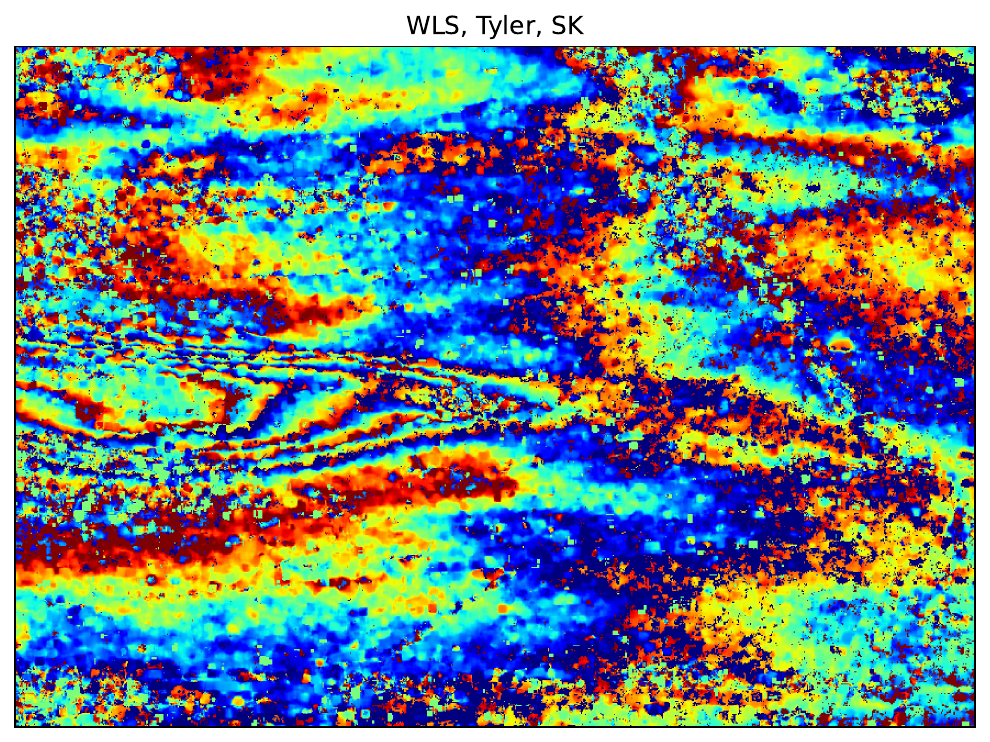}
    \end{minipage}
    \hfill{}
    \begin{minipage}{0.32\textwidth}
        \centering
        \includegraphics[width=1.0\columnwidth]{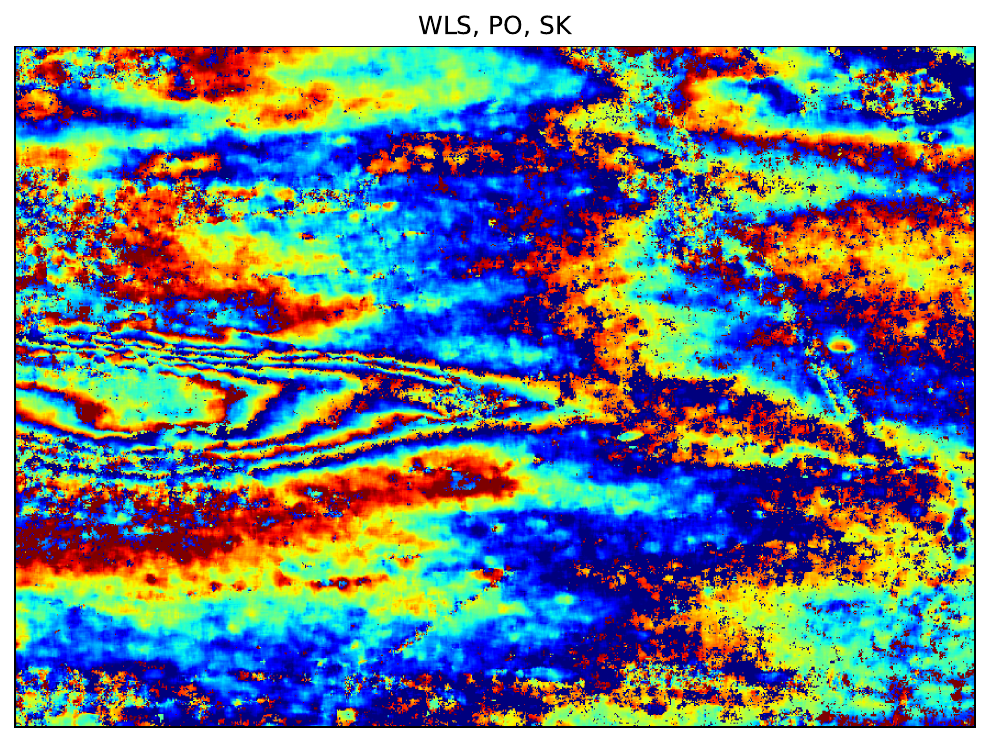}
    \end{minipage}  

    \begin{minipage}{1.0\linewidth}
    \centering
        \includegraphics[width=0.6\columnwidth]{./colorbar.pdf}
    \end{minipage}
\caption{
COFI-PL with WLS fitting applied to Mexico InSAR dataset with various plug-in estimators (from left to right: sample covariance matrix \eqref{eq:SCM}, Tyler's estimator \eqref{eq:mest} with $u_{T}(t) = p/t$, phase-only sample covariance matrix \eqref{eq:phase_only_SCM}) and various regularization processes (from top to bottom: no regularization, low-rank approximation \eqref{eq:low_rank_approx2} with $k=1$, tapering \eqref{eq:tapering} with bandwidth $b=9$, shrinkage to identity \eqref{eq:LWshrink} with $\beta=0.1$).
} \label{fig:WLS}
\end{figure*}

\subsection{Dataset}
To assess the performances of all proposed algorithms, we investigate one dataset with a temporal coverage of a year. The dataset consists of 31 Sentinel-1 SLC SAR images acquired between July 2019 and June 2020 over the southern region of Mexico Valley, within an endoreic basin surrounded by volcanic mountains. Pre-processing steps such as Sentinel-1 dedicated processing, coregistration, initial interferogram computation and filtering are done using the free software, SNAP, developed by ESA \cite{snap}. 
Results presented in the next paragraph are obtained using $p=31$ temporal samples and $n=64$ spatial samples (2D sliding window with size $8 \times 8$ pixels).

\subsection{Results}

For a given objective function set by the chosen matrix distance $d$, COFI-PL is applied to the data with various options of plug-in estimates and regularizations.
Fig. \ref{fig:KL} (resp. Fig. \ref{fig:LS}, and Fig. \ref{fig:WLS}) presents the output of the algorithm when KL (resp. LS, and WLS) is used as fitting objective.
COFI-PL with KL corresponds to most of standard IPL formulations (albeit the slight difference regarding regularization discussed in Section \ref{sec:sota}).
In this configuration, we can clearly see in Fig. \ref{fig:KL} that the quality of the plug-in estimate plays an important role (cf. first line), and that the phase-only sample covariance matrix \eqref{eq:phase_only_SCM} provides a great improvement compared to the standard sample covariance matrix.
The regularization also plays a crucial part in stabilizing the inversion of the modulus of the plug-in estimate (i.e., $\tilde{\boldsymbol{\Psi}}^{-1}$) that is required by construction with the KL fitting cost.
In this setup, the low-rank approximation \eqref{eq:low_rank_approx2} and shrinkage to identity \eqref{eq:LWshrink} appear to be the best regularization options in order to compensate a poor estimation or a poor conditioning of this quantity.
The tapering regularization using a hard threshold appears less stable, probably because it affects the aforementioned inversion step.
Hence, other (smoother) forms of tapering could be envisioned \cite{ollila2022regularized}.
The same conclusions can be drawn for COFI-PL with WLS fitting objective presented in Fig. \ref{fig:WLS}.
This was to be expected because the KL and WLS distances have a similar underlying construction that involves the inverse of the modulus of the plug-in estimate. 
We can still notice some differences: WLS is even more sensitive to the inversion instabilities induced by the tapering (since it depends quadraticly on $\tilde{\boldsymbol{\Psi}}^{-1}$), however it provides an interesting alternative to KL is some setups (notably when using Tyler's estimator as plug-in).
The quality and robustness of the plug-in estimate also plays a role for COFI-PL with LS fitting objective, as we can see an improvement brought by the phase-only sample covariance matrix in the first line of Fig. \ref{fig:LS}.
Interestingly, the LS setup appears much less sensitive to the conditioning of the plug-in estimate because no matrix inversion is required.
Hence results appear more stable with respect to the various regularization options.
A closer inspection of the outputs further reveals that regularization is still beneficial, notably when tapering is used.
As a matter of fact, LS fitting on a banded phase-only sample covariance matrix allowed to obtain the best visual results on this dataset.

In terms of optimization, majorization-minimization and Riemannian gradient descent yielded the same outputs for KL and LS fitting objectives, and WLS could only be evaluated with the Riemannian optimization framework.
In practice, majorization-minimization for KL and LS was experienced to be easier to tune (as no side parameters such as step size are involved) and provided solutions more quickly than a plain Riemannian gradient descent with constant step size.
Note that faster Riemannian optimization schemes could probably be obtained with more involved adaptive step size selection rules or alternate Riemannian optimization methods (e.g., Riemannian conjugate gradient or BFGS).
However, this last point goes beyond the scope of this paper.

\section{Conclusion}

This paper presented COFI-PL: a compact framework to design IPL algorithms based on the covariance fitting approach.
The formulation encompasses existing methods initially derived as approximate maximum likelihood, and allowed us to present some promising alternatives using various robust plug-in estimators, regularization methods, and fitting objective functions.
It also provides a clear and systematic framework to perform ablation studies, investigate the interest of alternates options in a single module, or assess the coupling effects when considering a joint estimation-regularization method of the covariance matrix.
Two methods to deal with the resulting optimization problems were then introduced: majorization-minimization and Riemannian optimization on the torus of phase-only complex vector.
Generic principles and tools were presented so that any cost optimization problem constructed within the COFI-PL framework can be addressed efficiently.
A real-world case study regarding the subsidience of Mexico city finally illustrated the interest of the proposed approach.
Experiments on this dataset notably evidenced the practical interest of using $i$) a LS fitting cost rather than the traditional KL divergence emanating from the maximum likelihood perspective; $ii$) the phase-only sample covariance matrix (a newly introduced covariance matrix estimate in the context of IPL) as plug-in estimate.

\footnotesize 
\bibliographystyle{IEEEbib}
\bibliography{references}

\end{document}